\documentclass[journal]{IEEEtran}
\usepackage{amsmath,amsfonts}
\usepackage{amsmath}
\usepackage{algorithmic}
\usepackage{algorithm}
\usepackage{array}
\usepackage[font=small]{caption}
\usepackage{textcomp}
\usepackage{stfloats}
\usepackage{url}
\usepackage{verbatim}
\usepackage{graphicx}
\usepackage{cite}
\usepackage{tikz}
\usepackage{tikzit}
\usepackage{amsthm}
\usepackage{subcaption}
\captionsetup[figure]{font=small}
\usepackage[normalem]{ulem}
\usepackage{pdfpages}

\usepackage{flushend}
\usepackage{amssymb}

\newtheorem{remark}{Remark}
\newtheorem{corollary}{Corollary}
\usepackage{mathtools}
\newtheorem{proposition}{Proposition}
\usepackage{breqn}
\usepackage[multiple]{footmisc}
\newcommand{\widesim}[2][1.5]{
  \mathrel{\overset{#2}{\scalebox{#1}[1]{$\sim$}}}
}



\tikzstyle{new edge style 0}=[-]
\tikzstyle{dashes}=[-, fill=none, thick, dashed]
\tikzstyle{new edge style 1}=[<->]

\newtheorem{theorem}{Theorem}
\newtheorem{lemma}{Lemma}
\DeclareMathOperator*{\argmax}{arg\,max}
\hyphenation{op-tical net-works semi-conduc-tor IEEE-Xplore}

\allowdisplaybreaks

\begin{document}

\title{Performance Analysis of Intelligent Reflecting Surface Assisted Opportunistic Communications}

\author{L. Yashvanth,~\IEEEmembership{Student Member,~IEEE}, Chandra R. Murthy,~\IEEEmembership{Senior Member,~IEEE}
\thanks{The authors are with the Department of Electrical Communication Engineering, Indian Institute of Science, Bangalore, India 560 012. (E-mails: \{yashvanthl, cmurthy\}@iisc.ac.in). This work was supported in part by the Qualcomm Innovation Fellowship, 2021.}
}



\maketitle

\begin{abstract}
Intelligent reflecting surfaces (IRSs) are a promising technology for enhancing coverage and spectral efficiency, both in the sub-6 GHz and the millimeter wave (mmWave) bands. Existing approaches to leverage the benefits of IRS involve the use of a resource-intensive channel estimation step followed by a computationally expensive algorithm to optimize the reflection coefficients at the IRS. In this work, focusing on the sub-6 GHz band of communications, we present and analyze several alternative schemes, where the phase configuration of the IRS is randomized and multi-user diversity is exploited to opportunistically select the best user at each point in time for data transmission. 
We show that the throughput of an IRS assisted opportunistic communication (OC) system asymptotically converges to the optimal beamforming-based throughput under fair allocation of resources, as the number of users gets large. We also introduce schemes that enhance the rate of convergence of the OC rate to the beamforming rate with the number of users.  For all the proposed schemes, we derive the scaling law of the throughput in terms of the system parameters, as the number of users gets large. 
Following this, we extend the setup to wideband channels via an orthogonal frequency division multiplexing (OFDM) system and discuss two OC schemes in an IRS assisted setting that clearly elucidate the superior performance that IRS aided OC systems can offer over conventional systems, at very low implementation cost and complexity. \end{abstract}

\begin{IEEEkeywords}
Intelligent reflecting surfaces, opportunistic communication, OFDM.
\end{IEEEkeywords}

\section{Introduction}
Intelligent Reflecting Surfaces (IRSs) have become a topic of active research for enhancing the performance of next generation wireless communication systems \textcolor{black}{both in the sub-6 GHz and in the millimeter wave (mmWave) bands.}  
An IRS consists of passive elements made out of meta-materials that can be tuned to offer a wide range of load impedances  using a PIN diode. 
\textcolor{black}{Using this,} 
each element of the IRS can be tuned to have a different reflection coefficient, and thereby enable the IRS to  
reflect the incoming signals in any desired direction~\cite{Renzo_EURASIP_2019,Basar_IA_2019,Wu_ICOMM_2020,Emil_arxiv_2021}. 
\textcolor{black}{However, realizing these benefits entails high overheads in terms of resource-intensive channel estimation followed by solving a computationally heavy optimization problem to determine the phase configuration at the IRS. In this work, we consider an alternative approach, 
where the phase configuration of the IRS is set randomly in each slot but yet extracts the benefits from the IRS in terms of the enhancement of the system throughput. This approach only requires a short training signal for estimating the received signal power at the users, followed by feedback-based selection of the best user in each slot for subsequent data transmission. Multi-user diversity ensures that at least one user will see a good channel in the randomly chosen phase configuration~\cite{Viswanath_TIT_2002}}.

\textcolor{black}{Despite its short history, significant work has gone into the design and optimization of IRS-aided communication systems. Here, we briefly summarize the existing literature, in order to place the contributions of this paper in context. In \cite{Wang_IVT_2020}, the authors show that an IRS can create a virtual line-of-sight (LoS) path between the base station (BS) and user, leading to improved coverage and SNR in mmWave systems. 
In \cite{Wu_GLOBECOM_2018}, it is shown that the received SNR increases quadratically with the number of IRS elements, provided the phase configuration of the IRS is optimized to ensure coherent combining of the signal at the receiver location. 
Also, since the IRS is passive in nature, it boosts the spectral efficiency without compromising on  the energy efficiency~\cite{Huang_TWC_2019}.
In \cite{Guo_GLOBECOM_2019}, the authors propose joint active and passive beamforming algorithms at the BS and IRS, respectively, to  maximize the weighted sum rate of an IRS-aided system.
IRS phase optimization in the context of multiple-input multiple-output (MIMO) and orthogonal frequency division multiplexing~(OFDM) systems have been studied in \cite{Yang_TCOM_2020, Rui_TWC_2020, Lin_TWC_early,Li_TCOM_2021,Li_WCNC_2020}, and the list of potential applications of IRS continues to grow~\cite{Wu_ICOMM_2020,Wu_TCOM_2021,Zhao_arxiv,Chen_CC_2021}.}

All of the above mentioned works  describe and solve a complex phase optimization problem, which are computationally intensive and difficult to implement in practical real-time systems. Further, this optimization becomes even more complex in the context of OFDM systems, as it requires one to optimize the IRS jointly across all the OFDM subcarriers.
 More importantly, these phase optimization algorithms work on the premise of the availability of accurate channel state information (CSI) of the links between the BS and the user through every IRS element. 
Elegant methods for channel estimation in IRS aided systems are described in \cite{Mishra_ICASSP_2019,Nadeem_JCS_2020}, but in all these schemes, the channel estimation overhead  scales linearly with the number of IRS elements. The time, energy and resource utilization for channel estimation can quickly erase much of the benefits offered by the IRS.  
One approach to mitigate this loss is to exploit structure in the channel model to estimate the channel with lower overhead~\cite{Wei_1_ICL_2021,Wei_2_ICL_2021,Lin_arxiv}, but these approaches trade-off the reduction in overhead with more complex channel estimation algorithms, thereby substantially increasing the computational cost.
   In addition, the complexity of the overall algorithm increases with the resolution with which phase shifts are configured~\cite{Wu_TCOM_2020}.
Thus, regardless of how they are implemented, the use of IRS incurs significant computational and training overheads in order to fully reap their professed benefits. 
     Furthermore, since the IRS is passive, these optimization algorithms have to run at the BS, and a dedicated control link from the BS to the IRS is needed to communicate the phase configuration information to the IRS.  As the number of IRS elements increases, this becomes an additional bottleneck, as the control link overhead also scales with the number of IRS elements \cite{Wu_ICOMM_2020}. 

\textcolor{black}{In the context of the above, it is pertinent to explore whether one can circumvent the channel estimation and phase optimization overheads and still obtain most of the benefits of an IRS assisted system. An interesting approach in this context is to configure the IRS with random phases and make the communications opportunistic in nature. 
In opportunistic communications (OC), at every point in time, we serve the user who witness the best instantaneous channel condition. When there are a large number of users in the system,\footnote{\textcolor{black}{In sub-6 GHz band communication system, the consideration of large number of users is realistic~\cite{Yang_SMJ_2022,Nadeem_TWC_2021,Nadeem_WCL_2021,Viswanath_TIT_2002}, especially in the context of massive machine-type communications (mMTC)~\cite{Mahmood_EURASIP_2021}.}} with high probability, deep fade events are avoided at any given user, enhancing the average system throughput without incurring the three fold overheads mentioned above~\cite{Viswanath_TIT_2002,Hassibi_TIT_2005, Asadi_CST_2013}. 
In particular, opportunistic scheduling is pertinent when the goal is to  maximize the system average throughput, i.e., the average sum-rate across the users over a long time horizon. Such an approach is suitable in delay-tolerant networks, where users can afford to wait  before being scheduled for transmission.} 
  
\textcolor{black}{Initial work along these lines was reported in \cite{Nadeem_WCL_2021}, where the phase angles of the reflection coefficients at the IRS elements are drawn uniformly and independently from the interval $[0, 2\pi)$.  As we show in the sequel, a drawback of this approach is that the number of users needed to achieve a performance comparable to coherent beamforming increases exponentially with the number of IRS elements, making it unattractive for practical implementation. Moreover, the \textcolor{black}{average} effective SNR scales linearly, not quadratically, in the number of IRS elements (as achieved by coherent beamforming.) 
In this work, we develop novel, alternative schemes that overcome these drawbacks. Although we focus on communications over the sub 6-GHz bands (FR-1 band in the 5G NR specifications~\cite{Pai_5G_spectrum}) assisted by an IRS, we also briefly discuss how one of the} \textcolor{black}{proposed} \textcolor{black}{schemes is relevant in mmWave bands also. For all the schemes, we analyze the  system throughput as a function of the number of users. We show that, by exploiting the structure in the channel, we can significantly improve the convergence rate of opportunistic throughput to the beamforming throughput and also achieve the quadratic scaling of the SNR with the number of IRS elements. This, in turn, allows us to achieve near-optimal beamforming performance and also obtain an additional gain from opportunistic user selection, without requiring a very large number of users in the system.} 

\textcolor{black}{
The specific contributions of our work are as follows:
\begin{itemize}
\item We analyze the throughput of several proposed IRS-assisted OC schemes for independent and identically distributed narrowband wireless channels. 
We exploit the fast switching time of IRS phase configurations to obtain additional reflection diversity from the IRS, and show that this helps to reduce the number of users required to obtain near-optimal throughput. We also analytically characterize the throughput achievable by this scheme. 
(See Theorem~\ref{thm:mulitple_pilot_rate} and Sec. \ref{sec:sel_div_scheme_nb}.)
\item Next, we consider directional  channels in the IRS aided system and design channel-aware randomly configured OC schemes that converge to the coherent beamforming rate without requiring the users \textcolor{black}{to scale exponentially with the number of IRS elements.}
In  Theorem~\ref{prop:rate_irs_channel_aware}, we show that not only does this scheme achieve the quadratic scaling of the SNR with the number of IRS elements, its throughput can even surpass that of the scheme that involves IRS optimization techniques, due to the lack of multi-user diversity gain in the latter.  We also discuss how this scheme can be applied to mmWave channels which also bear a similar structure. (See Sec.~\ref{subsec:steering_model_scheme}.)
\item We extend  the OC schemes to IRS aided systems with wideband wireless channels. Specifically, we consider an OFDM system and discuss two OC schemes, namely, single user OFDM where we schedule a single user across all subcarriers, and orthogonal frequency division multiple access (OFDMA) where multiple users are potentially scheduled across the subcarriers. We derive the sum throughput scaling laws in Theorems~\ref{thm:su_ofdm_rate_scale} and~\ref{thm:ofdma_rate_law} for the two schemes, and provide interesting insights about these~systems. (See Sec.~\ref{sec.opp_schemes_wideband}.)
\end{itemize}}
\vspace{-0.1cm}
\indent The results (in Sec.~\ref{sec.results}) show that the presence of an IRS can significantly enhance the throughput of conventional BS-assisted OC schemes~\cite{Viswanath_TIT_2002}. Specifically, the throughput of IRS aided OC grows with the number of IRS elements $N$, whereas such growth is not possible in BS  assisted OC as the number of antennas at the BS is increased. This is due to the power constraint at transmitter, which eventually limits the maximum achievable throughput. Secondly, the numerical results elucidate the significant reduction in the number of users to achieve the optimal throughput compared to existing schemes such as in~\cite{Nadeem_WCL_2021}. For example, in an $8$-element IRS system, the approach in~\cite{Nadeem_WCL_2021} has a gap of $175\%$ from the \textcolor{black}{optimal rate}; this gap reduces to $60\%$ by using the proposed reflection diversity enhanced scheme.
Further, the offset from \textcolor{black}{the coherent beamforming throughput} reduces to $11\%$ in the proposed channel model aware IRS assisted OC scheme. Also, the 
the channel model aware OC scheme is within a small offset ($18\%$) with a modest number of users ($\approx 50$), even when the number of IRS elements is as large as $1024$. Thus, IRS aided OC is a promising approach for exploiting the benefits of IRS-aided systems without incurring the cost of training, phase angle optimization, and communication to the IRS.


\emph{Notation:} 
$[N]$ stands for the set of natural numbers from $1$ to $N$; 
$|\cdot|,\angle\cdot$ stand for the magnitude and phase of a complex number (vector); 
$\|\cdot\|_p$ denotes the $\ell_p$ vector norm;
 $\mathcal{CN}(\boldsymbol{\mu},\mathbf{\Sigma})$ denotes a circularly symmetric complex Gaussian random vector with mean $\boldsymbol{\mu}$ and covariance matrix $\mathbf{\Sigma}$, $\mathcal{U}(\phi_0,\phi_1)$ denotes a uniformly distributed random variable with support $[\phi_0,\phi_1]$, $\exp(\lambda)$ denotes an exponentially distributed random variable with parameter $\lambda$; 
$Pr(\cdot)$ refers to the probability measure, and 
$\mathcal{O}(\cdot)$ is the Landau's Big-O notation.

 
\section{Preliminaries}\label{sec.prelims}
Opportunistic communication schemes exploit the 
variation of the fading channels across users in order to improve the throughput of a multi-user system.
For example, in max-rate based scheduling~\cite{Asadi_CST_2013}, 
the BS sends a common pilot signal to all the users in the system, and the users measure the received SNR. The BS then collects feedback from the user who witnesses the highest SNR,\footnote{We note that various timer-based and splitting based schemes can be used to identify the best user with low overhead \textcolor{black}{\cite{suresh_NCC_2010,Shah_TCOM_2010}}.} and schedules data to that user in the rest of the slot. 
That is, in a $K$ user system, the BS serves user $k^*$ at time $t$, where  
    \begin{equation*}
        k^* = \argmax\limits_{k \in [K]}\text{ }|h_k(t)|^2,
    \end{equation*}
with $h_k(t)$ denoting the channel seen by user $k$ at time $t$. 
However, this scheme is unfair to users located far away from the BS due to their higher path loss. 
An alternative is to consider the proportional fair (PF) scheduling scheme, which provides a  trade-off between fairness and system throughput~\cite{Viswanath_TIT_2002}. 
The PF scheduler serves user $k^*$ at time $t$ such that
       \begin{equation}\label{eq:pf_user_sel}
           k^* = \argmax_{k \in [K]} {\frac {R_k(t)}{T_k(t)}},
       \end{equation} 
       where $R_k(t) = \log_2\left(1+\frac{P|h_k(t)|^2}{\sigma^2}\right)$ 
       is the achievable rate\footnote{In this paper, we use rate and throughput interchangeably.}  of user $k$ in the current time slot, $P$ is the transmit power at the BS, $\sigma^2$ is the noise variance at the user, and $T_k(t)$ captures the long-term average throughput of user $k$. It is updated as
\begin{equation}\label{eq:tk_update}
            T_k(t+1) = \begin{cases}\left(1- { {\frac {1}{\tau}}}\right)T_k(t) + { {\frac {1}{\tau}}}\, R_k(t),\quad & k=k^*, \\ \left(1-{ {\frac{1}{\tau}}}\right)T_k(t), \quad & k \neq k^*.\end{cases}
        \end{equation}
    
Here, the variable $\tau$ represents the length of a window that captures the tolerable latency of the application and dictates the trade-off between fairness and throughput. Going forward,
we will refer to the term $\frac{R_k(t)}{T_k(t)}$ as the PF metric.    

A simple illustration of exploiting multi-user diversity through opportunistic scheduling based on the PF scheduler is shown in Fig. \ref{fig:pf_scheduler} for various values of $\tau$. We consider a time-varying channel across time slots modeled as a Gauss-Markov process, i.e., the channel at user $k$ varies  as 
\begin{equation}
        h_k(t) = \alpha h_k(t-1) + \sqrt{1-\alpha^2}v_k(t),
    \end{equation}
where $\alpha$ dictates the correlation of channel coefficients across time slots and $v_k(t) \sim \mathcal{CN}(0,1)$ is an innovation process. 
\begin{figure}[t]
        \centering
        \includegraphics[scale=0.58]{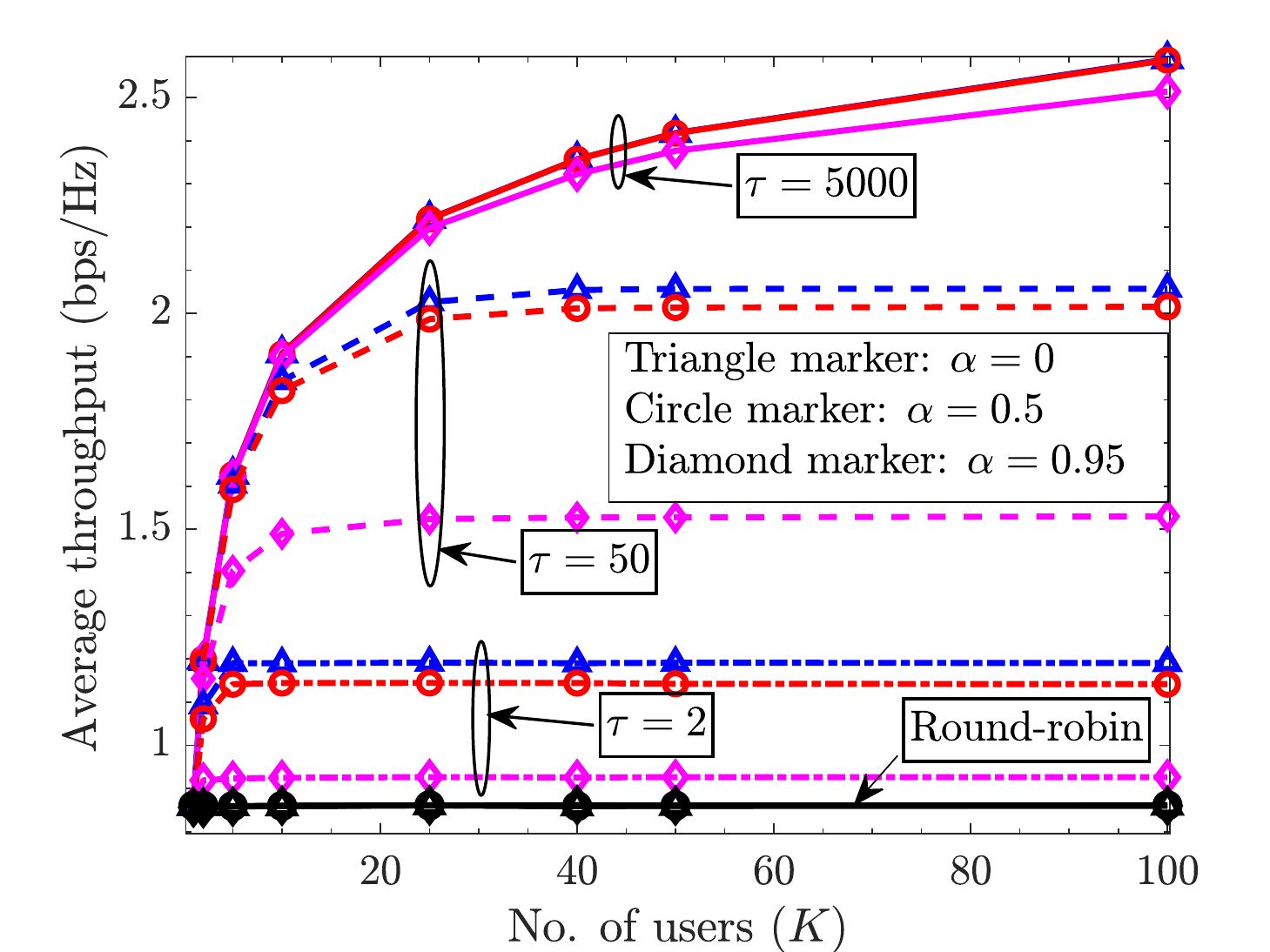}
        \caption{Throughput achieved via PF scheduling.}
        \label{fig:pf_scheduler}
    \end{figure}
    
Further, the performance of the OC scheme is compared with a non-opportunistic scheduling scheme, namely, round-robin time division multiple access (TDMA). 

As can be seen from the figure, the average throughput of the system at higher $\tau$ grows substantially with the number of users. This marks the rate-constrained regime, where the instantaneous rate is the primary factor determining which user gets scheduled (the system still offers fairness, but only over very large time-scales). Thus, for large $\tau$, the PF scheduler is approximately the same as a max-rate scheduler, and the throughput achieved by the PF scheduler approaches that achieved by the max-rate scheduler as $\tau$ goes to infinity. On the other hand, at lower $\tau$, the average throughput has negligible improvement after a few users, indicating that this is a fairness-constrained regime where users are selected in a nearly round-robin fashion to ensure short-term fairness. 
On the other hand, for a given choice of $\tau$, opportunistic scheduling performs better for lower values of $\alpha$ (representing a fast-fading scenario) compared to higher values (representing slow-fading scenarios). This is because, in fast-fading environments, the rate of channel fluctuations are enhanced, which improves the performance of opportunistic scheduling schemes. 

We note that such an enhancement of the rate of channel fluctuations can be obtained by choosing different, random phase configurations at an IRS. This has the additional advantage that the fluctuations induced by the IRS increases with the number of elements. This motivates us to 
take a fresh look at the opportunistic scheduling schemes in IRS-assisted scenarios. 
Specifically, we analyze the achievable throughput in IRS-assisted OC schemes. We show that one can obtain the benefits of using an IRS over conventional systems with low overhead and complexity by obviating the need for CSI acquisition, IRS phase optimization and feedback between the BS and the IRS.
\section{Single IRS Assisted Opportunistic User Scheduling for Narrowband Channels}\label{sec.opp_scheme_narrowband}
In this section, we present three OC schemes in a single IRS assisted  setting, for narrowband channels. We consider a single cell containing a BS equipped with one antenna serving $K$ single antenna users. An IRS equipped with $N$ reflecting elements is deployed at a suitable location in the radio propagation environment, as shown in Fig.~\ref{fig:IRS_model}. 
\subsection{IRS-Enhanced Multi-user Diversity}\label{sec:basic_scheme_nb}
\subsubsection{Channel Model}\label{sec:basic_ch_model}
 The signal transmitted by the BS reaches each user via a direct path as well as via the IRS. Thus, the effective downlink channel seen by user $k$ (at time slot $t$), denoted by $h_k$ (we omit the dependence on $t$ for notational brevity), is given by
\begin{equation}\label{eq:basic_channel}
	h_k = \sqrt{\beta_{r,k}}\mathbf{h}_{2,k}^H\mathbf{\Theta h}_1 +  \sqrt{\beta_{d,k}}h_{d,k},
\end{equation}
where $\mathbf{h}_{2,k}$ and  $\mathbf{h}_1 \in \mathbb{C}^{N\times 1}$  represent the channels between the IRS and user $k$, and between the BS and IRS, respectively, and $h_{d,k}$ denotes the direct  non-IRS channel between the BS and user $k$. 
 We model $\mathbf{h}_1 \sim \mathcal{CN}(\mathbf{0},\mathbf{I})$, $\mathbf{h}_{2,k} \widesim[2]{\text{i.i.d.}} \mathcal{CN}(\mathbf{0},\mathbf{I})$ and $h_{d,k} \widesim[2]{\text{i.i.d.}} \mathcal{CN}(0,1)$ across all users.
 Further, $\beta_{r,k}$ and $\beta_{d,k}$ represent the path loss between the BS and user $k$ through the IRS and direct paths, respectively. The diagonal matrix $\mathbf{\Theta} \in \mathbb{C}^{N \times N}$ contains the reflection coefficients programmed at the IRS, with each diagonal element being of the form $e^{j\theta_i}$, where $\theta_i \in [0, 2\pi)$ is the phase angle of the reflection coefficient at the $i$th IRS element. 
The signal received at every user is corrupted by AWGN $\widesim[2]{\text{i.i.d.}} \mathcal{CN}(0, \sigma^2)$. 
  
  \begin{figure}[t]
  \centering
  \includegraphics[scale=0.17]{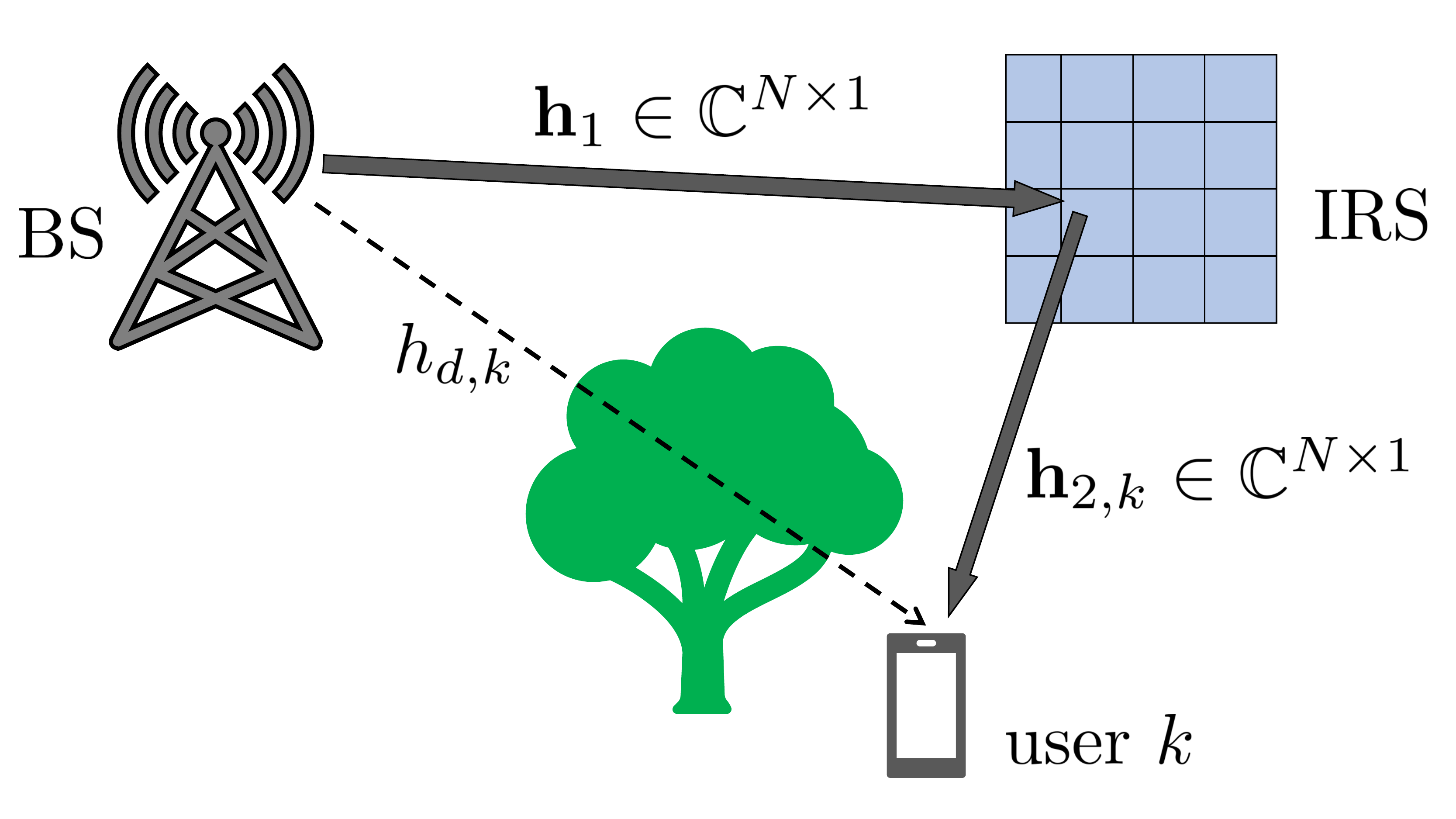}
  \caption{A single IRS assisted wireless system.}
  \label{fig:IRS_model}
  \end{figure}

\subsubsection{Scheme for IRS-Enhanced Multi-user Diversity} \label{sec:IRS_basic_scheme}
\textcolor{black}{In every time slot, the IRS sets a random phase configuration. Consequently, the effective channels seen by the users change in every slot. The BS transmits a pilot signal in the downlink at the start of the slot. The users measure the SNR from the  pilot signal, and compute their respective PF metrics. The user with the highest PF metric feeds back its identity to the BS which schedules that user for data transmission for the rest of~slot.} \\
\indent \textcolor{black}{ \emph{Feedback Mechanism:}  We consider timer or splitting based schemes~\cite{suresh_NCC_2010,Shah_TCOM_2010} for identifying the best user at the BS. These are low overhead \emph{distributed} user selection schemes, where only the best user transmits its identity to the BS, based on their channel-utility metric (e.g., using a timer that expires after a time interval that is inversely proportional to the SNR or the PF metric). It is known that, with these schemes, the BS can identify the best user within 2 or 3 (mini-)slots on average even as $K \rightarrow \infty$ \cite{suresh_NCC_2010,Shah_TCOM_2010}. Since the average overhead of a timer based feedback scheme is small compared to the time slot duration, we ignore its effect in this paper.}

In the above scheme, as the number of users in the system grows, the randomly chosen IRS configuration is likely to be close to the beamforming (BF) configuration for at least one of the users in the system, similar to~\cite{Viswanath_TIT_2002}. 
Note that, in this scheme,  
there is no communication from the BS to IRS,  making it attractive from an implementation perspective. Thus, the benefits of an optimized IRS can be readily obtained without requiring careful optimization of the IRS, provided there are a large number of users in the system \textcolor{black}{and the multi-user diversity gain is exploited.} \footnote{\textcolor{black}{Note that considering a large number of users  is realistic in 5G and beyond communications, where one of the key use-cases is to support massive machine type communications (mMTC)~\cite{Mahmood_EURASIP_2021}.}} We begin our discussion with the following lemma on the performance of an IRS that adopts a beamforming configuration to a given user, which will serve as a benchmark for evaluating the OC based schemes.
\begin{lemma}[\cite{Nadeem_WCL_2021}]\label{thm:basic_optimal_rate}
The rate achieved by user $k$ in an IRS aided system under the   \emph{beamforming configuration} is $R_{k}^{BF}
=$
\vspace{-0.2cm}
\textcolor{black}{\begin{equation}\label{eq:basic_optimal_rate} 
\log _{2}\!\left( \! 1\! +\! \frac {P}{\sigma ^{2}} \left|\sqrt {\beta _{r,k}} \sum _{n=1}^{N} |{h}_{1,n}{h_{2,k,n}}| 
   +  \sqrt {\beta _{d,k}} |h_{d,k}|\right|^{2}\right),
\end{equation}}
\vspace{-0.2cm}
with the beamforming configuration at the IRS given by
\begin{equation}\label{eq:basic_optimal_angle} 
\theta ^{*}_{n,k}=\angle h_{d,k}- \angle ({h}_{1,n}+{h}_{2,k,n}),\hspace{0.5cm} n=1, {\dots }, N.
\end{equation}
\end{lemma}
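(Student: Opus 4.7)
The plan is to observe that $R_k = \log_2(1 + P|h_k|^2/\sigma^2)$ is monotone in $|h_k|^2$, so the beamforming problem reduces to maximizing $|h_k|$ over the $N$ free phase variables $\{\theta_n\}_{n=1}^{N}$, treating the channel realizations $\mathbf{h}_1$, $\mathbf{h}_{2,k}$, and $h_{d,k}$ as fixed. To that end, I would expand the IRS contribution in \eqref{eq:basic_channel} elementwise,
\begin{equation*}
\mathbf{h}_{2,k}^{H}\mathbf{\Theta}\mathbf{h}_1 = \sum_{n=1}^{N} |h_{1,n}||h_{2,k,n}|\, e^{j(\theta_n + \angle h_{1,n} - \angle h_{2,k,n})},
\end{equation*}
so that $h_k$ becomes a sum of $N$ reflected complex scalars with magnitudes $\sqrt{\beta_{r,k}}|h_{1,n}||h_{2,k,n}|$, plus the direct contribution $\sqrt{\beta_{d,k}}|h_{d,k}| e^{j\angle h_{d,k}}$. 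The key structural observation is that $\theta_n$ appears only in the phase of its own summand and is otherwise unconstrained on $[0, 2\pi)$.

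Next, I would apply the triangle inequality to this sum of complex scalars to obtain
\begin{equation*}
|h_k| \;\leq\; \sqrt{\beta_{r,k}}\sum_{n=1}^{N}|h_{1,n}||h_{2,k,n}| + \sqrt{\beta_{d,k}}|h_{d,k}|,
\end{equation*}
which is precisely the magnitude inside the $\log$ in \eqref{eq:basic_optimal_rate}. The inequality is tight iff all $N+1$ summands share a common phase; I would enforce this by aligning each reflected term with the direct term, i.e., by requiring $\theta_n + \angle h_{1,n} - \angle h_{2,k,n} \equiv \angle h_{d,k} \pmod{2\pi}$. Solving for $\theta_n$ reproduces the configuration in \eqref{eq:basic_optimal_angle} (up to the paper's sign/notational convention for $\angle(\cdot)$ acting on the composite expression $h_{1,n}+h_{2,k,n}$). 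Because the alignment conditions decouple across $n$, coherent addition is simultaneously realizable at all $N$ elements, so the upper bound is actually attained; substituting the maximal $|h_k|^2$ back into $R_k$ then delivers \eqref{eq:basic_optimal_rate}.

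There is essentially no obstacle to this proof: the optimization separates across the $\theta_n$, each summand's magnitude is independent of the optimization variable, and the triangle inequality is both the bound and the recipe for achieving it. The only minor care needed is careful bookkeeping of conjugates and signs to recover exactly the form of $\theta^{*}_{n,k}$ stated in \eqref{eq:basic_optimal_angle}; no Lagrangian, KKT analysis, or iterative algorithm is required, which is why this lemma serves naturally as the benchmark against which the randomized OC schemes in the rest of the section are measured.
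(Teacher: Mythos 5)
Your proposal is correct and is exactly the standard derivation: monotonicity of the rate in $|h_k|^2$, elementwise expansion of $\mathbf{h}_{2,k}^{H}\mathbf{\Theta}\mathbf{h}_1$, the triangle inequality as an upper bound, and per-element phase alignment with the direct path to achieve it with equality; the paper itself supplies no proof, importing the lemma from \cite{Nadeem_WCL_2021}, whose argument is the same. Your parenthetical caution about sign conventions is warranted, since the stated $\angle(h_{1,n}+h_{2,k,n})$ in \eqref{eq:basic_optimal_angle} is evidently shorthand (or a typo) for the phase of the product $h_{1,n}h_{2,k,n}$, consistent with the magnitudes $|h_{1,n}h_{2,k,n}|$ appearing in \eqref{eq:basic_optimal_rate}.
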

\vspace{-0.2cm}
The above lemma quantifies the  gain that an IRS can offer compared to a system in the absence of IRS. However, achieving the rate in \eqref{eq:basic_optimal_rate}  requires the knowledge of the CSI through every IRS element whose complexity scales linearly with the number of elements, as stated earlier. 

Next, consider the system where the phase \textcolor{black}{of every IRS element} is selected uniformly at random from $[0, 2\pi)$ in each slot, and the user with the best PF metric is selected for transmission. We define the average rate achieved by the 
the randomly configured IRS assisted system to be $R^{(K)}= \mathbb {E}\left[\log _{2}\left(1+P|h_{k^*}|^2/\sigma^2\right)\right]$, where $k^*$ is the user selected in time slot $t$, and the expectation is taken over the randomness in the phase configuration.  
Then, under PF scheduling, \textcolor{black}{as $\tau \rightarrow \infty$, it is known that the average rate of the randomly configured IRS assisted system almost surely converges} to the average rate achievable in the beamforming configuration under 
fair resource allocation across users, i.e., \cite{Nadeem_WCL_2021} \begin{equation}\label{eq:basic_rand_2_opt}  {\lim_{K\to \infty }} \left(R^{(K)}-\frac {1}{K}\sum _{k=1}^{K} R_{k}^{BF}\right ) = 0. \end{equation}   
Note that, in \eqref{eq:basic_rand_2_opt}, the factor $\frac{1}{K}$ in the second term on left hand side accounts for the fairness ensured by the system. 

\begin{remark}[\textcolor{black}{On the convergence rate}]\label{remark1_basic_rate_of_convergence}
Let us compute the scaling of the number of users $K$ with the number of IRS elements $N$, such that, 
with a given (fixed) probability, a randomly selected phase configuration $\boldsymbol{\theta}$ at the IRS is nearly in beamforming configuration for at least one user. 
Consider an arbitrary user, and define the event $\mathcal{E}_i \triangleq \left\{ \theta_i \in \left[\theta_i^{*}-\epsilon,\theta_i^{*}+\epsilon\right] \right\}$, where $\theta_i^{*}$ is the phase angle required for the $i$th element of the IRS to be in beamforming configuration for that user. Since the phase angles at the IRS are chosen as $\theta_i \widesim[2]{\text{i.i.d.}} \mathcal{U}[0,2\pi)$, 
if we define $\mathcal{E} \triangleq \cap_{i=1}^N \mathcal{E}_i$, we have $Pr(\mathcal{E}) = \left(\epsilon/\pi\right)^N$. Then, the probability that at least one user in a $K$-user system sees an IRS phase configuration that is within $\epsilon$ distance of its beamforming configuration is 
$P_{\text{succ}} = 1 - \left(1-\left({\epsilon}/{\pi}\right)^N\right)^K.$
Hence, in order to have a fixed probability of success via i.i.d.\ randomly selected phase configurations, when $\epsilon/\pi \ll 1$, the number of users must scale with $N$ as 
\begin{equation} \label{eq:K_basic}
K \ge \left(- \log(1-P_{\text{succ}}) \right) (\pi/\epsilon)^N.
\end{equation}
In other words, the i.i.d.\ phase configuration scheme constrains the number of IRS elements that can be deployed when the number of users is limited.\footnote{\textcolor{black}{Note that the rate obtainable in an IRS assisted OC system always increases with $N$.  However, if $N$ is increased keeping $K$ fixed, the gap between the rate achieved by OC and the rate achievable under the beamforming configuration with fair resource allocation across users also increases, because the probability that no user is close to beamforming configuration increases. 
In fact, \textcolor{black}{for a large but fixed $K$ with i.i.d. channels}, the average rate in a randomly configured IRS grows as $\mathcal{O}(\log_2 N)$ (see \eqref{eq:rate_mud_reflection_div_nb}), whereas, in the beamforming configuration, it grows as $\mathcal{O}(\log_2 N^2)$ (see~\eqref{eq:basic_optimal_rate}).}} 
In the next subsections, we present and analyze schemes that improve the rate of convergence of $R^{(K)}$ to $\frac{1}{K}\sum _{k=1}^{K} R_{k}^{BF}$.
\end{remark}

\subsection{IRS-Aided Multi-user Diversity with Reflection Diversity}\label{sec:sel_div_scheme_nb}

In this subsection, we study an enhancement of foregoing scheme by offering additional \emph{reflection} diversity gain.  In this scheme, the IRS is configured using random and independent reflection coefficients during multiple consecutive pilot symbols transmitted at the beginning of each time slot. Note that, in this scheme, there is a one-to-one mapping between the pilot symbol index and the phase configuration used at the IRS.  Hence, the effective channel between the BS and the $k$th user during the $q$th pilot transmission in \textcolor{black}{a given time slot, denoted by $h_{k,q}$}, is different for each of the pilot symbols because the phase configuration of the IRS is different for each value of~$q$. 
\subsubsection{Channel Model}
We model the effective downlink channel $h_{k,q}$ using \eqref{eq:basic_channel}, with the phase configuration $\mathbf{\Theta}$ replaced with $ \mathbf{\Theta}_q$ for the $q$th pilot interval. 
\subsubsection{Scheme for IRS-Enhanced Multi-user Diversity Aided with Reflection Diversity}\label{sec:IRS_Q_scheme}
Inspired by the fast switching time of IRS phase configurations \cite{meta_material_switching_1,meta_material_switching_2} compared to the time slot (e.g., 10 ms frame duration in 5G NR~\cite{Dahlman_5g_book_2018}), we can obtain additional \textcolor{black}{reflection} diversity on top of the multi-user diversity by configuring the IRS with several random and independent reflection coefficients (phase configurations) during the pilot symbols transmitted at the beginning of each time slot. Every user chooses the best configuration among all the IRS phase configurations in every time slot, computes its PF metric, and the best user feeds back the corresponding \textcolor{black}{phase configuration index} and SNR to the BS.\footnote{In slowly varying channels, one can maintain the history of the phase configurations used in the previous time slots and the corresponding SNRs reported by the users, and avoid multiple pilot transmissions in each slot.} The BS then sets the IRS with the phase configuration received from the user selected for transmission, for the rest of the slot. 

Let $Q$ be the number of randomly chosen IRS phase configurations within a time slot, which is the same as the number of pilot transmissions. In the rest of this section, for analytical tractability, and similar to \cite{Nadeem_TWC_2021}, we consider the path loss coefficients to be equal across all links and users: $\beta_{r,k}\approx \beta_{d,k} = \beta$.\footnote{The first approximation is realistic as long as IRS path is not much longer than the non-IRS path. Also, due to the second equality, the PF scheduler boils down to the max-rate scheduler~\cite{Viswanath_TIT_2002}.} \textcolor{black}{We then have the following proposition.
\begin{proposition}\label{prop:ch_iid_Gaussian}
The effective channels, $h_{k,q}$, are i.i.d. across users and pilots for reasonably large $N$ and $Q \ll K,N$,\footnote{\textcolor{black}{More precisely, this is an approximation, obtained by assuming that the $Q$ projections result in independent random variables. This approximation would be accurate as long as $Q$ is smaller than the number of IRS elements.}} and further they follow the distribution~$\mathcal{CN}(0,\beta(N\!+1))$.
\end{proposition}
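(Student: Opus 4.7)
My plan is to decompose $h_{k,q} = \sqrt{\beta}(Z_{k,q} + h_{d,k})$ with $Z_{k,q} \triangleq \mathbf{h}_{2,k}^H \mathbf{\Theta}_q \mathbf{h}_1 = \sum_{n=1}^N h_{2,k,n}^{*}\, e^{j\theta_{q,n}}\, h_{1,n}$, and then to verify three things in turn: each $h_{k,q}$ is (approximately) $\mathcal{CN}(0,\beta(N{+}1))$, the channels are independent across users $k$, and they are independent across pilots $q$. In each case I would first argue the claim conditional on an appropriate subset of the randomness, and then lift the conditional statement to an unconditional one using large-$N$ concentration of $\|\mathbf{h}_1\|^2$.

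For the marginal distribution, condition on $\mathbf{h}_1$ and $\mathbf{\Theta}_q$. Then $Z_{k,q}$ is a deterministic linear combination $\sum_n (e^{j\theta_{q,n}} h_{1,n})\, h_{2,k,n}^{*}$ of the i.i.d.\ $\mathcal{CN}(0,1)$ entries of $\mathbf{h}_{2,k}^{*}$, so $Z_{k,q} \mid \mathbf{h}_1, \mathbf{\Theta}_q \sim \mathcal{CN}(0, \|\mathbf{h}_1\|^2)$ exactly. Because $\|\mathbf{h}_1\|^2$ is a sum of $N$ i.i.d.\ unit-mean exponentials, the SLLN (or a Chernoff bound for a quantitative version) gives $\|\mathbf{h}_1\|^2/N \to 1$, so for reasonably large $N$ the conditioning effectively fixes the variance at $N$. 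Adding the independent direct term $h_{d,k}\sim\mathcal{CN}(0,1)$ and pulling out the common scaling $\sqrt{\beta}$ yields the claimed approximation $h_{k,q} \approx \mathcal{CN}(0,\beta(N{+}1))$.

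For independence across users (fix $q$, take $k \ne k'$), the key observation is that the pairs $(\mathbf{h}_{2,k}, h_{d,k})$ and $(\mathbf{h}_{2,k'}, h_{d,k'})$ are independent, so given $(\mathbf{h}_1, \mathbf{\Theta}_q)$ the two channels are functions of disjoint random variables and hence exactly conditionally independent. The only source of coupling is the shared $\mathbf{h}_1$, and the only statistic of $\mathbf{h}_1$ that actually enters either conditional distribution is $\|\mathbf{h}_1\|^2$, which concentrates around $N$; hence the conditional independence lifts to approximate unconditional independence in the large-$N$ regime. The case $k\ne k'$, $q\ne q'$ is even easier since one simply adds $\mathbf{\Theta}_q$ and $\mathbf{\Theta}_{q'}$ to the disjoint collections.

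For independence across pilots (fix $k$, take $q \ne q'$), I would compute the conditional cross-covariance
\begin{equation*}
\mathbb{E}\!\left[Z_{k,q}Z_{k,q'}^{*} \,\big|\, \mathbf{h}_1, \mathbf{h}_{2,k}\right] = \sum_{n,m} h_{2,k,n}^{*} h_{2,k,m} h_{1,n} h_{1,m}^{*}\, \mathbb{E}\!\left[e^{j(\theta_{q,n} - \theta_{q',m})}\right],
\end{equation*}
and use the fact that the $\theta$'s are i.i.d.\ $\mathcal{U}[0,2\pi)$, whose characteristic function vanishes at every nonzero integer, to conclude that every summand is zero. To upgrade zero correlation to independence, I would invoke a multivariate CLT on the joint vector $(Z_{k,q})_{q=1}^Q$: the $Q$ random unit-modulus probing vectors are approximately pairwise orthogonal in $\mathbb{C}^N$ when $Q \ll N$, which is exactly the regime that makes the $Q$ projections behave like independent Gaussian draws. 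The main obstacle will be making this chain simultaneously rigorous in the finite-$N$ regime: both the CLT approximation for $Z_{k,q}$ (via a Berry--Esseen-type bound) and the concentration of $\|\mathbf{h}_1\|^2$ must be quantitative enough that uncorrelatedness combined with conditional independence genuinely yields joint independence. At the level of this paper, I expect a heuristic argument in the spirit of~\cite{Nadeem_TWC_2021} will suffice.
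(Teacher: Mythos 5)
Your proposal is correct and follows essentially the same route as the paper's proof in Appendix C of the supplementary material: condition on $\mathbf{h}_1$ and $\mathbf{\Theta}_q$ to get exact conditional Gaussianity of the cascaded term with variance $\|\mathbf{\Theta}_q\mathbf{h}_1\|^2=\|\mathbf{h}_1\|^2$, concentrate $\|\mathbf{h}_1\|^2$ around $N$ for large $N$, add the independent direct path to reach $\mathcal{CN}(0,\beta(N+1))$, use disjointness of $(\mathbf{h}_{2,k},h_{d,k})$ across users for inter-user independence, and treat the $Q$ projections onto the nearly orthogonal probing vectors $\mathbf{\Theta}_q\mathbf{h}_1$ as independent when $Q\ll N$. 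The only point you leave implicit is that $h_{k,q}$ and $h_{k,q'}$ also share the common direct-path term $h_{d,k}$, which contributes a residual cross-pilot correlation of $1/(N+1)$ --- negligible under the same large-$N$ assumption you already invoke.
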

 \begin{proof}
See Appendix C in the supplementary material.
\end{proof}
We can also observe numerically (See Fig.~\ref{fig4:mud_sd_rate}, Sec.~\ref{sec.results}) that this proposition holds true even for moderate values of $N$.\footnote{\textcolor{black}{Note that, if the BS and IRS are deployed at fixed and high locations, the channel
between the BS and IRS will remain static for several time slots and have a strong LoS component. One can then model this channel using a deterministic array steering response vector~\cite{Nadeem_WCL_2021}. Since the distribution of a circularly symmetric Gaussian random variable is unchanged under rotation by a deterministic phase angle, $h_{k,q} \sim \mathcal{CN}(\cdot)$ still holds true \textcolor{black}{for any $N$}.}}}
 We now note that, as $Q$ increases, the time remaining for data transmission in each frame decreases. Thus, the average throughput of a system adopting this scheme is 
\begin{equation}\label{eq:mul_pilot_intial_rate}
    {R^{(K,Q)} =  \left(1 - \zeta Q\right) \mathbb{E} \left[\log_2\left(1+   \max_{\substack{q\in [Q], \\ k\in[K]}}   \frac{P\vert h_{k,q}\vert ^{2}}{\sigma^2}\right)\right]},  
\end{equation}
where $\left(1 - \zeta Q\right)$ is the pre-log factor accounting for the loss in the throughput due to transmitting $Q$ pilot symbols in each slot, $\zeta$ is the fraction of the time slot expended in a single pilot transmission, and the expectation is taken with respect to the random IRS phase configurations \textcolor{black}{and fading channels}. Note that we account for the $(1-\zeta Q)$ factor only in this subsection, since multiple pilot symbols are used. In rest of the paper, since only a single pilot transmission occurs, we ignore its effect on the throughput. \textcolor{black}{The following theorem characterizes the scaling of the average system throughput of IRS enhanced multi-user diversity aided with reflection diversity as a function of the system parameters. 
}
\begin{theorem}\label{thm:mulitple_pilot_rate}
Consider an $N$-element IRS aided system with $K$ users and $Q$ pilot transmissions, as described above. \textcolor{black}{Under Proposition~\ref{prop:ch_iid_Gaussian}}, the average system throughput scales~as
\begin{multline}\label{eq:rate_mud_reflection_div_nb}
     \lim_{K\to \infty } \left( R^{(K,Q)} - (1-\zeta Q) \right. \\ \left. \times \log_2\left(1+\!\frac{\beta P}{\sigma^2}(N+1)\ln(QK)\right)\right) = 0.
\end{multline}
\end{theorem}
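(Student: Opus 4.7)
The plan is to invoke Proposition~\ref{prop:ch_iid_Gaussian}, under which the squared magnitudes $|h_{k,q}|^2$ are i.i.d.\ exponentially distributed with mean $\beta(N+1)$ across all $k\in[K]$ and $q\in[Q]$. Defining $M:=QK$, $c:=\beta P(N+1)/\sigma^2$, and the normalized maximum $W:=\max_{k,q}|h_{k,q}|^2/(\beta(N+1))$, the target difference can be rewritten as
\begin{equation*}
R^{(K,Q)} - (1-\zeta Q)\log_2(1+c\ln M) = (1-\zeta Q)\Bigl(\mathbb{E}[\log_2(1+cW)] - \log_2(1+c\ln M)\Bigr).
\end{equation*}
Since $(1-\zeta Q)$ is a fixed constant, it suffices to show the bracketed difference tends to $0$ as $K\to\infty$ (equivalently, as $M\to\infty$), where $W$ is the maximum of $M$ independent unit-mean exponentials.

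I would then invoke two elementary extreme-value facts about such a $W$. First, the exact mean is the harmonic number $\mathbb{E}[W]=H_M=\ln M + \gamma + o(1)$. Second, for any fixed $\delta\in(0,1)$,
\begin{equation*}
\Pr\bigl(W<(1-\delta)\ln M\bigr) = \bigl(1-M^{-(1-\delta)}\bigr)^M \le \exp(-M^{\delta}) \longrightarrow 0,
\end{equation*}
so $W/\ln M\to 1$ in probability. The principal obstacle is that convergence in probability of $W/\ln M$ does not by itself imply convergence of the expectation $\mathbb{E}[\log_2(1+cW)]$ to $\log_2(1+c\ln M)$; I will therefore sandwich this expectation with matching upper and lower bounds that both collapse onto $\log_2(1+c\ln M)$.

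For the upper bound, the concavity of $\log_2(1+c\,\cdot)$ and Jensen's inequality give $\mathbb{E}[\log_2(1+cW)]\le \log_2(1+cH_M)$, and then
\begin{equation*}
\log_2(1+cH_M) - \log_2(1+c\ln M) = \log_2\!\left(1 + \frac{c(\gamma + o(1))}{1+c\ln M}\right)\longrightarrow 0,
\end{equation*}
which yields $\limsup_{K\to\infty}\bigl(\mathbb{E}[\log_2(1+cW)]-\log_2(1+c\ln M)\bigr)\le 0$. For the lower bound, I would restrict the expectation to the high-probability event $A:=\{W\ge(1-\delta)\ln M\}$ and use non-negativity of $\log_2(1+cW)$ off $A$ to obtain
\begin{equation*}
\mathbb{E}[\log_2(1+cW)] \ge \log_2\bigl(1+c(1-\delta)\ln M\bigr)\,\Pr(A).
\end{equation*}
Since $\Pr(A)\to 1$ exponentially fast while $\log_2(1+c(1-\delta)\ln M)=O(\log\log M)$, the correction $\log_2(1+c(1-\delta)\ln M)(1-\Pr(A))$ vanishes; moreover $\log_2\bigl(1+c(1-\delta)\ln M\bigr)-\log_2(1+c\ln M)\to \log_2(1-\delta)$ as $M\to\infty$. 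Hence $\liminf_{K\to\infty}\bigl(\mathbb{E}[\log_2(1+cW)]-\log_2(1+c\ln M)\bigr)\ge \log_2(1-\delta)$ for every $\delta>0$. Letting $\delta\to 0^+$ closes the gap between the two bounds and establishes~\eqref{eq:rate_mud_reflection_div_nb}.
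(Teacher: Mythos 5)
Your proof is correct, but it follows a genuinely different route from the paper's. The paper pools the $QK$ i.i.d.\ exponential gains, verifies the hazard-function (von Mises type) condition of Lemma~\ref{EVT_1}, and invokes the Gumbel limit to conclude that the maximum grows like $l_{QK}=F^{-1}(1-\tfrac{1}{QK})=\beta(N+1)\ln(QK)$, which is then substituted into \eqref{eq:mul_pilot_intial_rate}. You instead avoid extreme-value theory entirely: for the upper bound you combine Jensen's inequality with the exact harmonic-number mean $\mathbb{E}[W]=H_M=\ln M+\gamma+o(1)$ of the maximum of $M$ unit-mean exponentials, and for the lower bound you use the explicit tail estimate $\Pr(W<(1-\delta)\ln M)\le \exp(-M^{\delta})$ together with nonnegativity of the rate, then let $\delta\to 0^+$. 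What your approach buys is rigor at precisely the point the paper glosses over: Lemma~\ref{EVT_1} only gives convergence in distribution of the centered maximum, and passing from that to convergence of the \emph{expected} log-rate requires an additional uniform-integrability or truncation argument that the paper does not supply; your two-sided sandwich handles this explicitly and is fully self-contained. What the paper's route buys is the identification of the limiting Gumbel fluctuation (useful for the subsequent theorems that reuse the same lemma) and a shorter derivation. Both arguments yield the same scaling law \eqref{eq:rate_mud_reflection_div_nb}.
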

\begin{proof}
See Appendix \ref{app:mul_pilot_txn_rate_nb}.
\end{proof}

In \eqref{eq:rate_mud_reflection_div_nb}, the pre-log factor decreases with $Q$, while the logarithmic factor increases with $Q$. 
Therefore, the exists a $Q$ for which \eqref{eq:rate_mud_reflection_div_nb} is maximized. The following lemma provides the optimal $Q$ as the solution of an implicit equation, which can be solved using fixed-point iteration methods. We skip the proof as it is straightforward.
\begin{lemma}\label{lemma_Q*}
The number of pilots $Q$ that maximizes $R^{(K,Q)}$ in \eqref{eq:rate_mud_reflection_div_nb} for a given $K$ and $N$, denoted by $\hat{Q}$, satisfies the fixed point equation
    $\log_2(QK) = 
    e^{W\left(\zeta^{-1}Q^{-1}-1\right)}/\beta(N+1),$
where $W(\cdot)$ is the Lambert $W$ function. Then, the optimal integer valued $Q$ is
\begin{equation}\label{eq:integer_optimal_q}
    Q^* = \argmax\limits_{\left \lceil{\hat{Q}}\right \rceil,\left \lfloor{\hat{Q}}\right \rfloor }  R^{(K,Q)}.
\end{equation}
\end{lemma}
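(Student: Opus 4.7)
The plan is to treat the objective
$R^{(K,Q)} = (1-\zeta Q)\log_2\!\bigl(1+\tfrac{\beta P}{\sigma^2}(N+1)\ln(QK)\bigr)$
as a function of a continuous variable $Q>0$, locate its stationary point, and then round back to the integer grid. Write $a \triangleq \beta P(N+1)/\sigma^{2}$ and $u(Q) \triangleq 1+a\ln(QK)$ so that $R^{(K,Q)} = (1-\zeta Q)\log_2 u(Q)$. Differentiating gives
\[
\frac{dR^{(K,Q)}}{dQ}
= -\zeta\log_2 u(Q) + \frac{(1-\zeta Q)}{\ln 2}\cdot\frac{a}{Q\,u(Q)}.
\]
Setting this to zero and clearing logarithms yields the critical-point relation
\[
u(Q)\,\ln u(Q) \;=\; a\!\left(\frac{1}{\zeta Q}-1\right).
\]

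This is the step that invites the Lambert $W$ function. Substituting $v \triangleq \ln u(Q)$ rewrites the left hand side as $v e^{v}$, and by the defining identity $W(x)e^{W(x)}=x$ we obtain $v = W\!\bigl(a(\zeta^{-1}Q^{-1}-1)\bigr)$, i.e.\ $u(Q) = \exp W\!\bigl(a(\zeta^{-1}Q^{-1}-1)\bigr)$. Substituting back the definition of $u(Q)$ and isolating $\log_2(QK)$ (up to the prefactor $a$, modulo any normalization choice the authors adopt with $P/\sigma^2$) gives exactly the fixed-point equation stated in the lemma. Because $Q$ appears on both sides the equation must be solved iteratively; standard contraction arguments for $W$ composed with a monotone function ensure that a simple fixed-point iteration converges, which is why the lemma is stated in fixed-point form rather than as a closed expression.

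To justify that the stationary $\hat{Q}$ corresponds to a maximum (rather than a minimum or saddle), I would argue by boundary behavior and unimodality: as $Q\to 0^{+}$ the logarithm in $R^{(K,Q)}$ diverges only like $\log_2\log_2(QK)^{-1}$ while the pre-log factor tends to $1$, so the increment from pilot reflection diversity dominates; as $Q\to 1/\zeta$ the pre-log factor vanishes and $R^{(K,Q)}\to 0$. Since $R^{(K,Q)}$ is smooth, nonnegative, and vanishes at the right endpoint while growing from zero-pilot baseline, any interior stationary point is a local maximum; a short sign analysis of $dR/dQ$ (positive for small $Q$, negative near $1/\zeta$) then establishes uniqueness of $\hat{Q}$ in $(0,1/\zeta)$.

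Finally, because the number of pilot transmissions must be a positive integer, the continuous optimum $\hat{Q}$ is not directly usable. Unimodality of $R^{(K,Q)}$ in $Q$ implies that the integer maximizer lies at one of the two neighbors of $\hat{Q}$, so the discrete optimum is $Q^{*}=\argmax_{Q\in\{\lfloor\hat{Q}\rfloor,\lceil\hat{Q}\rceil\}} R^{(K,Q)}$, as stated. The only real obstacle is the algebraic manipulation into Lambert-$W$ form; everything else (monotonicity of each factor, boundary behavior, and the integer rounding argument) is routine, which is presumably why the authors label the proof straightforward.
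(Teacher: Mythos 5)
Your approach is the intended one: the paper explicitly skips this proof as ``straightforward,'' and the calculus-plus-Lambert-$W$ route you take is exactly how the stated fixed-point equation arises. Your stationarity condition $u\ln u = a(\zeta^{-1}Q^{-1}-1)$ with $a=\beta P(N+1)/\sigma^2$ and $u=1+a\ln(QK)$ is correct, and unwinding it gives $\ln(QK) = \bigl(e^{W(a(\zeta^{-1}Q^{-1}-1))}-1\bigr)/a$; as you note, this matches the printed lemma only up to constants (the lemma drops the factor $a$ inside $W$, the $-1$ in the numerator, the $P/\sigma^2$ in the denominator, and writes $\log_2$ for $\ln$), which appear to be typographical simplifications in the statement rather than errors in your derivation. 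Two small points: your boundary discussion at $Q\to 0^{+}$ is not quite right, since $\ln(QK)\to-\infty$ there and the rate expression is undefined; the relevant domain is $Q\in[1,1/\zeta)$, and positivity of $dR/dQ$ at the left endpoint is parameter-dependent (when it fails, $Q^*=1$). This does not damage the rounding argument, because a cleaner observation closes it: both $-\zeta\ln u(Q)$ and $(1-\zeta Q)a/(Qu(Q))$ are strictly decreasing on $(1/K,1/\zeta)$, so $dR/dQ$ has at most one zero, $R^{(K,Q)}$ is unimodal, and the integer optimum must be $\lfloor\hat Q\rfloor$ or $\lceil\hat Q\rceil$ as claimed.
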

\begin{remark}[\textcolor{black}{On the feedback requirement}]
The feedback requirement in this scheme is slightly higher than the previous scheme. In addition to feeding back the best overall SNR, each user also sends an additional $\log_2 Q$ bits to indicate the index of the IRS phase configuration that yielded this best SNR at the user. Furthermore, after scheduling the user by the BS, the BS has to inform the IRS to configure to the phase configuration that gave the best SNR to the scheduled user. However, this additional signalling is still substantially lower than the signalling required by conventional IRS phase optimization schemes.
\end{remark} 

\begin{remark}[\textcolor{black}{On the convergence rate}]
Continuing with  Remark~\ref{remark1_basic_rate_of_convergence}, in order to ensure that with probability at least $P_{\text{succ}}$, there is a user for which the IRS configuration used in one of the $Q$ pilots is within an $\epsilon$ ball of its optimal configuration, we need 
\begin{equation}
	K \ge  \frac{1}{Q}\left(-\log(1-P_{\text{succ}})\right) (\pi/\epsilon)^N,
\end{equation}
when $\epsilon/\pi \ll 1$. Thus, employing $Q$ random phase configurations at the IRS during the pilot transmissions is equivalent to having $KQ$ users in the system. Hence, a performance close to that achieved by  optimal configuration at the IRS is possible with fewer users compared to the scheme in Sec.~\ref{sec:IRS_basic_scheme}. 
\end{remark}
\subsection{IRS Channel Model Aware Multi-user Diversity}\label{subsec:steering_model_scheme}
In the preceding section, a method to improve the performance of the basic scheme in Sec.~\ref{sec:basic_scheme_nb} was proposed by introducing multiple pilot transmissions. However, as we will see from {Sec.~\ref{sec.results}}, for both the schemes, the gap between the optimal rate and opportunistic rate increases with the number of IRS elements, especially in the large user regime. 
We now describe a method to further overcome this limitation by accounting for the channel structure in IRS aided systems, namely, that  the IRS is deployed such  that the BS-IRS and IRS-user channels exhibit strong LoS paths. On the other hand, the direct link between the BS and user may be non-LoS and experience high path loss/shadowing effects. Thus, in this section, we ignore the contribution of the direct link, as in~\cite{Tian_EURASIP_2021}.
\subsubsection{Channel Model}\label{subsubsec:steering_model_ch}
We represent $\mathbf{h}_1$ and $\mathbf{h}_{2,k}$ as LoS channels using array steering vectors. Considering an $N$-element uniform linear array (ULA) based IRS, \textcolor{black}{the LoS channels in the sub-6 GHz bands} can be modeled as~\cite{Tian_EURASIP_2021}
\vspace{-0.2cm}
\begin{multline}\label{eq:steering_ch_model_1}
    \mathbf{h}_1 = \left[1, e^{-j\frac{2\pi d}{\lambda}\sin(\theta_A)}, e^{-j\frac{4\pi d}{\lambda}\sin(\theta_A)}, \ldots, \right. \\ \left. e^{-j\frac{2\pi (N-1) d}{\lambda}\sin(\theta_A)}\right]^T, 
\end{multline}
 \begin{multline}\label{eq:steering_ch_model_2}
     \mathbf{h}_{2,k} = h_{k}'\left[1, e^{-j\frac{2\pi d}{\lambda}\sin(\theta_{D,k})}, e^{-j\frac{4\pi d}{\lambda}\sin(\theta_{D,k})}, \ldots, \right.\\ \left. e^{-j\frac{2\pi (N-1) d}{\lambda}\sin(\theta_{D,k})}\right]^T,
\end{multline}
where $\theta_A$ and $\theta_{D,k}$ are the direction of arrival (DoA) and direction of departure (DoD) of the $k$th user at the IRS, $d$  and $\lambda$ are the inter-IRS element distance and signal wavelength, and $h_k'$ is the Rayleigh distributed channel for the $k$th user. The other parameters are as in Sec.~\ref{sec:basic_ch_model} except for the absence of the non-IRS path. For the analysis, the total path loss is considered to equal $\beta$ for all users as in~\cite{Nadeem_TWC_2021}. 
\subsubsection{Scheme for IRS  Channel Model Aware Multi-user Diversity}
Since the locations of the IRS and BS are fixed, and owing to the strong LoS path between the BS and IRS, it is  realistic to consider that the channel $\mathbf{h}_1$ remains constant for relatively long time intervals~\cite{Tse_book_2005}. Hence, we assume that the knowledge of $\mathbf{h}_1$ is available at the IRS. For example, we can perform a system calibration to estimate the channel $\mathbf{h}_1$ offline.\footnote{One approach to accomplish this is illustrated in \cite{Wei_1_ICL_2021}. The authors rely on an active receiver near the IRS, whose channel to the BS is similar to that of the BS-IRS channel. Then, estimating the channel between the BS and this active receiver can help in estimating~$\mathbf{h}_1$.} Further, in slow fading scenarios, the DoD statistics of all the users (at the IRS) do not change significantly over multiple time slots. Thus, by means of this historic information, the BS could potentially initialize the IRS with the range of DoDs to all the users. The IRS then draws the phase angles from a distribution that depends on the DoD statistics. We first derive a distribution from which IRS phase angles can be randomly drawn. With $\theta'_k \triangleq \frac{2\pi d}{\lambda}(\sin(\theta_A) + \sin(\theta_{D,k}))$, the channel at user $k$ for \textcolor{black}{the} IRS configuration $\boldsymbol{\Theta}$ is given by
\begin{align}\label{eq:channel_steering_inner_pdt}
& h_k = \sqrt{\beta} \mathbf{h}_{2,k}^T \boldsymbol{\Theta}\mathbf{h}_1  \\
& = \sqrt{\beta} h_k' \sum_{n=1}^N e^{-j\left( (n-1)\theta'_k  \right) + j\theta_n}. \nonumber
\end{align} 
Clearly, due to the Cauchy-Schwarz inequality, $|h_k|$ is maximized iff $\theta_i = \frac{2\pi(i-1) d}{\lambda}  (\sin(\theta_A) + \sin(\theta_{D,k}))$ for all $i$, and also denotes the beamforming configuration of the IRS. 

\indent Let the DoA at the IRS from the BS be $\theta_A$ and let the DoDs at the IRS to the users be \textcolor{black}{randomly and independently generated from a uniform distribution with support $[\phi_0,\phi_1]$}. Then, in every time slot, the $i$th element of the IRS randomly chooses its phase configuration $\theta_i$ as follows:
\begin{equation}\label{eq:steering_optimal_ditbn}
	\theta_i = \frac{2\pi  (i-1) d}{\lambda}(\sin(\theta_A) + \sin(\phi)),
\end{equation}
where $\phi \sim \mathcal{U}[\phi_0,\phi_1]$. The rest of the scheme proceeds as in Sec.~\ref{sec:IRS_basic_scheme}, with the BS scheduling the user with the highest PF metric for data transmission in the current~slot. 

To investigate the performance of this scheme, first, using \eqref{eq:channel_steering_inner_pdt}, it is clear that
\begin{equation}\label{eq:steer-ch-gain}
	|h_k|^2 = \beta {\left| \sum\nolimits_{n=1}^{N} e^{-j\left((n-1)\theta'_k-\theta_n \right) } \right|}^2 \cdot |h_k'|^2.
\end{equation} 
The maximum value of the first term in \eqref{eq:steer-ch-gain} is $\beta N^2$ which is achieved by the beamforming configuration. Thus, when $K$ is large, for every $\eta \in (0,1)$, there exists a $\delta>0$ such that, for a subset of $\eta K$ users,  almost surely, we have~\cite[Sec.III.B]{Viswanath_TIT_2002} 
\begin{equation}\label{eq:min_rate_ch_aware}
\beta {\left| \sum\nolimits_{n=1}^{N} e^{-j\left((n-1)\theta'_k-\theta_n \right) } \right|}^2 > \beta N^2 - \delta.
\end{equation}
\textcolor{black}{Thus, when $K$ is large, for any randomly chosen IRS phase configuration as per~\eqref{eq:steering_optimal_ditbn}, there will almost surely exist a set of users whose overall channel experiences near-optimal beamforming configuration.} On the other hand, the second term in \eqref{eq:steer-ch-gain} denotes the square of the channel gains, which are i.i.d.\ across users. We characterize the behavior of this term using extreme value theory (EVT). In particular, using Lemma~\ref{EVT_1}, it can be shown that $\max_k |h_k'|^2$ grows as $\ln K$. Hence, among the $\eta K$ users, the maximum of $|h_k|^2$ grows with $K$ at least as fast as
\begin{equation}\label{eq:steer-growth}
(\beta N^2 - \delta)\ln(\eta K) = (\beta N^2 - \delta)\ln(K) + \mathcal{O}(1),
\end{equation}
as $K \rightarrow \infty$. Clearly, the case with $\delta = 0$, which happens when at least one user is in beamforming configuration, serves as an upper bound on the rate of growth of the $|h_k|^2$ in \eqref{eq:steer-growth}. As a consequence, we have the following theorem.

\begin{theorem} \label{prop:rate_irs_channel_aware}
For the IRS channel model aware multi-user diversity scheme, the average system throughput scales as
 \begin{equation}\label{eq:rate_steering_scheme}
\lim_{K\to \infty}  \left( R^{(K)} - \mathcal{O}\left( \log_2\left(1 + \frac{\beta P}{\sigma^2}N^2 \ln K\right)\right) \right) = 0.
\end{equation}
\end{theorem}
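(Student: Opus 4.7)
The plan is to sandwich $R^{(K)}$ between two quantities that both scale as $\log_2\!\bigl(1 + \tfrac{\beta P}{\sigma^{2}} N^{2}\ln K\bigr)$, so that \eqref{eq:rate_steering_scheme} follows in the $\mathcal{O}(\cdot)$ sense. Starting from $R^{(K)} = \mathbb{E}\bigl[\log_{2}\bigl(1+\tfrac{P}{\sigma^{2}}\max_{k\in[K]}|h_{k}|^{2}\bigr)\bigr]$, I would use the factorization in \eqref{eq:steer-ch-gain}, writing $|h_{k}|^{2}=\beta A_{k}|h_{k}'|^{2}$ with $A_{k}\triangleq\bigl|\sum_{n=1}^{N} e^{-j((n-1)\theta'_{k}-\theta_{n})}\bigr|^{2}\le N^{2}$, where equality occurs exactly at the beamforming alignment. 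Because of this clean multiplicative separation, the random IRS steering term $A_{k}$ and the i.i.d.\ Rayleigh term $|h_{k}'|^{2}$ can be handled independently.

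For the upper bound, I would invoke $A_{k}\le N^{2}$ uniformly to obtain $\max_{k}|h_{k}|^{2}\le \beta N^{2}\max_{k}|h_{k}'|^{2}$, and then apply the extreme value result (Lemma~\ref{EVT_1}) for the maximum of $K$ i.i.d.\ unit-mean exponentials, which grows a.s.\ like $\ln K$. For the lower bound, I would exploit the matching of distributions built into~\eqref{eq:steering_optimal_ditbn}: since the IRS draws $\phi\sim\mathcal{U}[\phi_{0},\phi_{1}]$ from the same support as the user DoDs $\theta_{D,k}$, for every $\epsilon>0$ a strictly positive fraction $\eta(\epsilon)>0$ of users satisfy $|\sin\phi-\sin\theta_{D,k}|\le\epsilon$. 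A short continuity/Riemann-sum argument on the partial sum $\sum_{n=1}^{N}e^{-j(n-1)(\theta'_{k}-\theta_{\mathrm{align}})}$ then gives $A_{k}\ge N^{2}-\delta(\epsilon)$ for these users, with $\delta\to 0$ as $\epsilon\to 0$, exactly as asserted in~\eqref{eq:min_rate_ch_aware}. Applying Lemma~\ref{EVT_1} to the $\eta K$ i.i.d.\ exponentials restricted to this sub-population produces the growth $\beta(N^{2}-\delta)(\ln K + O(1))$ claimed in~\eqref{eq:steer-growth}.

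The main obstacle will be turning these almost-sure rate-of-growth statements into the expectation-level convergence stated in the theorem. I would handle this via uniform integrability of $\log_{2}(1+\tfrac{P}{\sigma^{2}}\max_{k}|h_{k}|^{2})$, which follows from the exponential tail of $|h_{k}'|^{2}$, so that the almost-sure $\ln K$ scaling of the extremum carries over to the expectation. The final algebraic cleanup is straightforward: since $\log_{2}(1+c\ln K)$ differs from $\log_{2}(1+c'\ln K)$ by a vanishing term as $K\to\infty$ whenever $c,c'>0$ are positive constants, the matching upper and lower bounds collapse into the single $\mathcal{O}\bigl(\log_{2}(1+\tfrac{\beta P}{\sigma^{2}}N^{2}\ln K)\bigr)$ expression, letting $\delta\to 0$ at the end yields~\eqref{eq:rate_steering_scheme}.
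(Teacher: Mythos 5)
Your proposal follows essentially the same route as the paper's own (inline) argument preceding the theorem: the factorization of $|h_k|^2$ in \eqref{eq:steer-ch-gain} into an array-gain term bounded by $N^2$ and an i.i.d.\ Rayleigh term, the existence of an $\eta K$-sized sub-population within $\delta$ of the beamforming gain as in \eqref{eq:min_rate_ch_aware}, extreme-value theory (Lemma~\ref{EVT_1}) giving the $\ln K$ growth, and the sandwich with $\delta=0$ serving as the upper bound. The additional details you supply (the continuity argument yielding a positive fraction $\eta(\epsilon)$ of aligned users, and uniform integrability to pass from almost-sure growth to the expectation) only make explicit what the paper leaves implicit, so the two arguments coincide in substance.
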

The term $N^2$ in \eqref{eq:rate_steering_scheme} shows that this scheme attains the  maximum possible array gain from the IRS  by exploiting the presence of strong LoS paths, while 
the $\log K$ term is due to multiuser diversity. This scheme thus even outperforms the scheme where  optimization methods are used in IRS aided systems without multiuser diversity (e.g., see \eqref{eq:basic_optimal_rate}). \textcolor{black}{Also, the SNR scaling under i.i.d.\ channels is $\mathcal{O}(N)$, whereas it is $\mathcal{O}(N^2)$ under strong LoS channels. This is because, under i.i.d.\ channels, the variance of the effective channel  scales as $N$ (see Proposition~\ref{prop:ch_iid_Gaussian}), while in the latter case it scales as $N^2$, at least for the subset of users satisfying \eqref{eq:min_rate_ch_aware}. In turn, when the scheduler the best user for data transmission, the SNR scales as $\mathcal{O}(N^2)$ as per~\eqref{eq:rate_steering_scheme}. A similar observation is made in \cite{Viswanath_TIT_2002} in the non-IRS context, when comparing the performance of i.i.d. fast fading channels and correlated channels.}
\begin{remark}[\textcolor{black}{On the convergence rate}]
Similar to Remark \ref{remark1_basic_rate_of_convergence}, under the channel model in \eqref{eq:steering_ch_model_1},  \eqref{eq:steering_ch_model_2}, we have $Pr(\mathcal{E}) = 1-\left(1-\left(\frac{\epsilon}{\pi}\right)\right)^K$ \textcolor{black}{when the IRS phases are sampled as in~\eqref{eq:steering_optimal_ditbn}}. Thus, the $K$ required for near-optimal beamforming does \emph{not} grow with $N$, and the opportunistic rate converges much faster to the beamforming based rate compared to the scheme in Sec. \ref{sec:basic_scheme_nb}. This is illustrated in Figs.~\ref{fig5:steering_rate} and~\ref{fig6:rate_vs_IRS} later in the sequel.
\end{remark}
\vspace{-0.7cm}
\textcolor{black}{
\begin{remark}[mmWave bands]
	The channel model in~\eqref{eq:steering_ch_model_1}, \eqref{eq:steering_ch_model_2} is a  special case of mmWave channels~\cite{Raghavan_JSTSP_2016} with the number of paths set to $1$. 
	Since, by exploiting the knowledge of the channel statistics to design the distribution from which the random phase configurations are drawn we can obtain significant multi-user diversity gains even with a relatively small number of users, randomly configured IRS-aided OC can obtain significant benefits in mmWave scenarios also. 
\end{remark}}
\color{black}
\section{Single IRS Assisted Opportunistic User Scheduling  for Wideband Channels}\label{sec.opp_schemes_wideband}
In this section, we investigate IRS assisted OC over an $L$ tap wideband channel.
 We consider a multiuser OFDM system where all users in the system are served over a given total bandwidth. Since the IRS operates over the entire bandwidth  (i.e., it is not possible to apply different phase configurations for different sub-bands),\footnote{This assumes that the IRS elements are not frequency selective, similar to past work in the area \cite{Zheng_WCL_2020,Yang_TCOM_2020}. In fact, by appropriately designing the tuning parameters of the IRS circuit elements, it is possible to achieve non-frequency selectivity of the IRS elements even in wideband systems \cite{Katsanos_Arxiv_2022}.} we first analyze the performance of an IRS assisted OFDM system where all the subcarriers are allocated to a single user who has the best channel condition collectively among the subcarriers. In the second scheme, we configure the OFDM based multiple access (OFDMA) and study the performance improvement offered by the multiplexing gain in addition to multi-user diversity. We refer the former scheme as single-user OFDM (SU-OFDM) and latter scheme as OFDMA. As before, for the  analysis, we assume that all users experience similar large scale propagation effects with path loss coefficient $\beta$, and hence we model the channels across the users in an i.i.d. fashion. 
\subsection{IRS Enhanced Multi-user Diversity in a Single-user OFDM (SU-OFDM) System}\label{sec.su-ofdm}
\subsubsection{Channel Model}\label{sec:ch_model_ofdm}
Consider a time domain channel  seen by the $k$th user in an $N$-element IRS setting. 
Let $\mathbf{h}_{d,k} \in \mathbb{C}^{L\times 1}$ be the $L$-tap channel between the BS and user $k$ through the direct (non-IRS) path. Let $\mathbf{H}_{2,k} \in \mathbb{C}^{N \times L}$ denote the $L$-tap channel between IRS and user $k$ across all IRS elements. Note that, without loss of generality, we assume that the number of taps in the direct channel and the IRS-user channel to be the same. This can be done by letting $L$ denote the maximum of the number of taps in the two channels. Since the channel between the BS and IRS is typically LoS, it can be modeled as a single-tap channel between the BS and each of the $N$ elements of the IRS, denoted by $\mathbf{h}_1 \in \mathbb{C}^{N \times 1}$ (see~\cite{Rui_TWC_2020}.)
Furthermore, due to the strong LoS component, $\mathbf{h}_1$ can be modelled as an array steering response vector when the IRS is configured as a ULA (see \eqref{eq:steering_ch_model_1}). 
We assume that channels between the IRS and the users across all the $L$ taps are independent of each other~\cite{Nadeem_WCL_2021,Tse_book_2005}.
The exact statistics of the channels are provided below. 
The composite channel of user $k$ can then be compactly written as
\begin{equation}\label{eq:ofdm_ch}
\mathbf{h}_k = \mathbf{h}_{d,k} + \mathbf{H}_{2,k}^T\boldsymbol{\Theta}\mathbf{h}_1 \in \mathbb{C}^{L \times 1}.
\end{equation}
 
In this work, we use an exponentially decaying power delay profile (PDP) in the lag domain. 
Let $\breve{h}_{k,l,n} \triangleq h_{1,n}h_{2,k,l,n}$ denote the gain of the $l$th tap of the fading channel between the BS and the $k$th user through the $n$th IRS element. Then, the PDP of the link is given by
\begin{equation}
    a_{l} \triangleq \mathbb{E}[|\breve{h}_{k,l,n}|^2] = ce^{- \nu l/L}, \hspace{0.1cm} \forall k \in [K], n \in [N], \label{eq:pdp}
\end{equation} where $c$ is chosen such that $\sum_l \mathbb{E}[|\breve{h}_{k,l,n}|^2] = 1$, and $\nu$ captures the decay rate of the channel tap power with $l$. Hence, we have, $\|\mathbf{a}\|_1 = 1$, where $\mathbf{a} \triangleq [a_{1}, a_{2}, \ldots, a_{L}]^T$ represents the power in each of the $L$ taps. Therefore, the $l$th component of the channel in \eqref{eq:ofdm_ch} can be written as 
$h_{k,l} = h_{d,k,l} + \sum_{i=1}^{N} e^{j\theta_i}\breve{h}_{k,l,i}$. If $h_{d,k,l},h_{2,k,l,i} \sim \mathcal{CN}(0,a_l)$ across the IRS elements and since $|h_{1,n}|^2 = 1$ for all $n \in [N]$, it is easy to show that $h_{k,l} \sim \mathcal{CN}(0,(N+1)a_l)$ and independent across the users and $L$ taps. Equivalently, in the OFDM system with $M$ subcarriers, if we let $\tilde{\mathbf{h}}_{k} \in \mathbb{C}^{M \times 1}$ denote the frequency-domain channel vector for user $k$, we have $\tilde{\mathbf{h}}_{k} = \mathbf{F}_{M,L}\mathbf{h}_k$ where $\mathbf{F}_{M,L}$ is the matrix containing the first $L$ columns of the $M \times M$ DFT matrix.\footnote{In this paper, we compute the discrete Fourier transform (DFT) as $X[m] = \sum_{l=0}^{M-1}x[l]e^{-j\frac{2\pi ml}{M}}$ for all $m \in \{0,1,\ldots, M-1\}$.} 
Thus, the channel at subcarrier $m$ for user $k$ follows $\tilde{h}_{k}[m] \sim \mathcal{CN}\left(0,N+1\right)$, and we also have the Parseval's relation $\mathbb{E}[\|\tilde{\mathbf{h}}_{k}\|_2^2] = M \, \mathbb{E}[\|\mathbf{h}_k\|_2^2]$.

\subsubsection{Scheme for IRS-Enhanced Multi-user Diversity in SU-OFDM Systems}
As before, we randomly set the phase configuration at the IRS in every time slot. The BS then applies equal power on all the subcarriers and broadcasts pilot symbols to all the users. In this section, for simplicity and analytical tractability, we assume that the BS uses equal power allocation across subcarriers during data transmission also; note that this is near-optimal in the high SNR regime.  
The users estimate the channels and compute the sum rate obtainable by them across all the subcarriers, and compute their respective PF metrics. The user with the highest PF metric sends its identity back to the BS, and  is scheduled for transmission using all the subcarriers by the BS. Note that, with a slightly higher feedback overhead, the scheme easily extends to the case where optimal water-filling power allocation is used by the BS. Here, after estimating the channels across the subcarriers, the user computes the sum rate achievable by it with water-filling power allocation, and uses this to compute its PF metric. In this case, instead of feeding back a packet containing the user identity and the SNR, the user with the highest PF metric will also need to include the power allocation vector, which is of size $\approx 4M$ bits (assuming $16$-levels of power control in each subcarrier.)

Under equal power allocation, in a $K$ user system, the maximum average sum rate obtainable across subcarriers with a total power constraint $P$ and noise variance $\sigma^2$ is
\begin{align}\label{eq:su_ofdm_intial_rate}
\hspace{-0.2cm} R_{\text{SU-OFDM}}^{(K)} = \max\limits_{1 \leq k \leq K} \sum_{m=0}^{M-1} \log_2\left(1 + \frac{\beta P}{M\sigma ^2}|\tilde{h}_k[m]|^2 \right)\!.\!
\end{align}
\textcolor{black}{From~\eqref{eq:su_ofdm_intial_rate}, it is clear that we need to characterize the maxima of the sum of random variables. However, since the expression is not easily tractable, we upper bound the term by invoking the Jensen's inequality. From Parseval's theorem, $\sum_{m=0}^{M-1} |\tilde{h}_k[m]|^2 = M \sum_{l=0}^{L-1}|h_{k,l}|^2$. Using the monotonicity of $\log$, we get 
\begin{equation}\label{eq:su-ofdm-jensen-parsevel}
	R^{(K)}_{\text{SU-OFDM}} \leq \log_2\left(1+ \frac{\beta P}{\sigma^2} \left\{\max\limits_{1 \leq k \leq K} \sum_{l=0}^{L-1}|h_{k,l}|^2\right\}  \right).
\end{equation}
We illustrate the tightness of the above upper bound in the numerical results section.
Recall that 
$|h_{k,l}|^2 \sim \exp(\frac{1}{(N+1)a_l})$ and these form a set of independent and non-identically distributed (i.n.d.) random variables across $L$ taps. First, we characterize the distribution of the sum-term in~\eqref{eq:su-ofdm-jensen-parsevel}. We can show from~\cite{Sheetal_TCOM_2012} that if $\{X_i\}_{i=1}^{L}$ are a set of $L$ i.n.d. exponential random variables with mean $\mu_i$, then the cumulative distribution function (cdf) of $Y \triangleq \sum_{i=1}^{L} X_i$ is given by 
\begin{multline}\label{eq:finite_sum_exponential_cdf}
	F_{Y}(y)={y^{L}\over \Gamma(1+L) \prod_{i=1}^{i=L} \mu_i^{-2}}\times \\  \Upsilon_{2}^{(L)} \Bigg(1, \ldots, 1;1+L;-{\mu_{1}y}, \ldots, -{\mu_{L}y}\Bigg),	
\end{multline}
where $\Upsilon_2^{(L)}(\cdot)$ is the confluent Lauricella function~\cite{Exton_book_1976}. Thus, setting $\mu_i = (N+1)a_i$ in~\eqref{eq:finite_sum_exponential_cdf} will give the distribution of the sum-term in~\eqref{eq:su-ofdm-jensen-parsevel} and call it $\Tilde{F}(\cdot)$. In what follows, we characterize the maximum of such i.i.d. sum-terms. To that end, we can show that the cdf in~\eqref{eq:finite_sum_exponential_cdf} satisfies the Von Mises' condition (see Lemma~\ref{EVT_2} in Appendix~\ref{app:su_ofdm_rate})~\cite{Sheetal_TCOM_2012}. Thus, 
\begin{equation}\label{eq:su-ofdm-exact-characterization}
	\max\limits_{1 \leq k \leq K} \sum_{l=0}^{L-1}|h_{k,l}|^2 \xRightarrow[]{K \rightarrow \infty} \Tilde{F}\left(1-\dfrac{1}{K}\right),
\end{equation}
where, by $\{X_k\}_{k=1}^{K} \xRightarrow[]{K \rightarrow \infty} c$, we mean $\lim_{K\rightarrow\infty}X_k - c \xrightarrow[]{d} Y$ and $Y$ is a degenerate random variable. In other words, the sum-term in~\eqref{eq:su-ofdm-jensen-parsevel} can be replaced with the right hand side of~\eqref{eq:su-ofdm-exact-characterization} when $K$ is large. However, the resultant expression, although accurate, has two demerits: $1)$ It does not provide any explicit and useful insight on how the sum-rate scales with the total number of users, $2)$ the characterization is not  tractable for comparison and analyzing the performance.}

\textcolor{black}{Hence, we seek approximations to the system model by considering large $L$ (In Sec.VII.B of the supplementary material, we discuss on how large $L$ needs to be.) In the next section (Sec.\ref{sec.results}), we numerically show that this approximation works well even when $L$ is as small as $5$.}
Then, in view of~\eqref{eq:su-ofdm-jensen-parsevel}, we have the following theorem for reasonably large~$L$.
\begin{theorem}\label{thm:su_ofdm_rate_scale}
Consider an $N$-element IRS assisted SU-OFDM system with $M$ subcarriers and $L$ time-domain taps with  power delay profile $\mathbf{a}$ and a total power constraint $P$ and noise variance $\sigma^2$. Then, for large $L$, the average sum rate of IRS enhanced multi-user diversity in an SU-OFDM system \textcolor{black}{under equal power allocation}, $R_{\text{SU-OFDM}}^{(K)}$, scales as
\begin{multline}\label{eq:su-ofdm-rate-law}
{\lim_{K\to \infty } } \left(R_{\text{SU-OFDM}}^{(K)} - \mathcal{O}\left(\log_2\left\{1 + \frac{\beta P}{\sigma^2}(N+1) \right.\right.\right.\\\left.\left.\left. \times \left[1 + \| \mathbf{a}\|_2 \Phi^{-1}\left(1-\frac{1}{K}\right)\right]   \right\}\right)\right) = 0,
\end{multline}
where $\Phi^{-1}(\cdot)$ is the inverse of the cdf of a standard normal random variable. 
\end{theorem}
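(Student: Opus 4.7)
The plan is to start from the Jensen upper bound in~\eqref{eq:su-ofdm-jensen-parsevel} and then reduce the problem to characterizing the maximum over $K$ i.i.d.\ copies of the sum-random-variable $S_k \triangleq \sum_{l=0}^{L-1}|h_{k,l}|^2$, where each summand is an independent but non-identically distributed $\exp\!\bigl(1/((N+1)a_l)\bigr)$ variable. Since the exact characterization via the confluent Lauricella function in~\eqref{eq:finite_sum_exponential_cdf} is opaque, I would replace $S_k$ by a Gaussian approximation valid in the large-$L$ regime (as the theorem assumes), and then invoke extreme value theory to pin down the leading-order behavior of $\max_k S_k$.

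First, I would compute the first two moments of $S_k$. Using $\mathbb{E}[|h_{k,l}|^2] = (N+1)a_l$ and $\operatorname{Var}(|h_{k,l}|^2) = (N+1)^2 a_l^2$, and recalling $\|\mathbf{a}\|_1 = 1$, I get $\mathbb{E}[S_k] = (N+1)$ and $\operatorname{Var}(S_k) = (N+1)^2 \|\mathbf{a}\|_2^2$. Next, because $S_k$ is a sum of $L$ independent (bounded-moment) exponentials, for sufficiently large $L$ a Lindeberg/Lyapunov-type CLT applies and yields
\begin{equation*}
S_k \;\approx\; \mathcal{N}\!\bigl((N+1),\,(N+1)^2\|\mathbf{a}\|_2^2\bigr),
\end{equation*}
with the approximation being i.i.d.\ across users. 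I would defer the question of ``how large is large enough'' to a supplementary discussion (as the authors indicate they do in Sec.~VII.B of the supplement), noting that the numerical section will verify the approximation holds even for moderate $L$.

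Second, I would characterize $\max_{1\le k\le K} S_k$ via EVT. Gaussian distributions satisfy the Von Mises condition and belong to the Gumbel max-domain of attraction, so an application of the same EVT machinery used earlier (e.g., Lemmas~\ref{EVT_1}/\ref{EVT_2} in the appendix) yields
\begin{equation*}
\max_{1\le k\le K} S_k \;\xRightarrow[]{K\to\infty}\; (N+1)\Bigl[\,1 + \|\mathbf{a}\|_2\,\Phi^{-1}\!\bigl(1-\tfrac{1}{K}\bigr)\Bigr],
\end{equation*}
in the sense of~\eqref{eq:su-ofdm-exact-characterization}. Substituting this into~\eqref{eq:su-ofdm-jensen-parsevel} and using continuity of $\log_2(1+\cdot)$ together with the fact that Jensen's bound is asymptotically tight in the scaling sense (the $\log_2$ being a concave, slowly-varying function of the SNR argument so that the $\mathcal{O}(\cdot)$ scaling survives) delivers~\eqref{eq:su-ofdm-rate-law}.

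The main obstacle I expect is the two-step approximation: first passing from the Lauricella cdf to a Gaussian via CLT for an i.n.d.\ sum (which requires the $a_l$'s to be non-degenerate enough for Lyapunov's condition to hold, and which only becomes tight for $L$ large), and then applying EVT to the Gaussian rather than to the true cdf $\tilde F$. Care must be taken that the $K\to\infty$ limit and the $L$-large approximation interchange correctly; in particular, the $\Phi^{-1}(1-1/K)$ term grows like $\sqrt{2\ln K}$, which is an $o(\sqrt{L})$ perturbation of the mean so long as $K$ is not super-exponential in $L$, keeping the CLT approximation valid in the relevant regime. The $\mathcal{O}(\cdot)$ in the theorem statement is what absorbs the remaining EVT normalization constants, so no finer constants need to be tracked.
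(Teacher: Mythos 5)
Your proposal is correct and follows essentially the same route as the paper's own proof: a Lyapunov-type CLT on the i.n.d.\ exponential sum to obtain the Gaussian approximation $(N+1)\bigl(1+\|\mathbf{a}\|_2\,\bar h_k\bigr)$ for large $L$, followed by the Von Mises/Gumbel extreme-value characterization $l_K=\Phi^{-1}(1-1/K)$ of the maximum over $K$ users, substituted into the Jensen/Parseval upper bound~\eqref{eq:su-ofdm-jensen-parsevel}. The moments you compute and the two key lemmas you invoke match the paper's Appendix~\ref{app:su_ofdm_rate} step for step, so no further comparison is needed.
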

\begin{proof}
See Appendix \ref{app:su_ofdm_rate}.
\end{proof}
In the above result, since the argument of $\Phi^{-1}(\cdot)$ is close to 1 for large $K$, we can use $\Phi(x) \approx \frac{1}{2}\left( 1 + \sqrt{1-e^{-\frac{2}{\pi}x^2}}\right)$~\cite{Eidous_MS_2016}. Consequently, from~\eqref{eq:su-ofdm-rate-law}, we can explicitly determine the dependence of the sum rate on $K$, as in the following corollary. 
\begin{corollary}
For the setup in Theorem ~\ref{thm:su_ofdm_rate_scale}, we have
 \begin{multline}\label{eq:su_ofdm_rate_approx}
{\lim_{K\to \infty }} \left(R_{\text{SU-OFDM}}^{(K)} - \mathcal{O}\left(\log_2\left\{1 + \frac{\beta P}{\sigma^2}(N+1) \right.\right.\right.\\\left.\left.\left. \times \left[1 +  \Vert\mathbf{a}\Vert_2 \sqrt{\frac{\pi}{2}\ln K}\right]   \right\}\right)\right) \approx 0.
\end{multline}
\end{corollary}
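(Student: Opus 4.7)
The plan is straightforward: this corollary is a direct substitution result that replaces the implicit term $\Phi^{-1}(1-1/K)$ in Theorem~\ref{thm:su_ofdm_rate_scale} with an explicit function of $K$. So I would invert the stated closed-form approximation $\Phi(x) \approx \tfrac{1}{2}\bigl(1 + \sqrt{1-e^{-2x^2/\pi}}\bigr)$ at the argument $1-1/K$, and plug the resulting expression back into \eqref{eq:su-ofdm-rate-law}.

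Concretely, I would set $\tfrac{1}{2}\bigl(1 + \sqrt{1-e^{-2x^2/\pi}}\bigr) = 1 - 1/K$ and solve for $x$ step by step: rearrange to $\sqrt{1-e^{-2x^2/\pi}} = 1 - 2/K$, square both sides to get $e^{-2x^2/\pi} = 4/K - 4/K^2$, take logarithms to obtain $x^2 = \tfrac{\pi}{2}\bigl(\ln K - \ln 4 + \ln(1 - 1/K)\bigr)$, and then drop the $o(\ln K)$ terms for large $K$ to arrive at $\Phi^{-1}(1-1/K) \approx \sqrt{(\pi/2)\ln K}$. Substituting this into the bracketed term $\bigl[1 + \|\mathbf{a}\|_2 \Phi^{-1}(1-1/K)\bigr]$ inside the logarithm in~\eqref{eq:su-ofdm-rate-law} immediately yields~\eqref{eq:su_ofdm_rate_approx}.

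The only delicate step is justifying that the lower-order terms $-\ln 4$ and $\ln(1-1/K)$ can be absorbed under the $\mathcal{O}(\cdot)$ notation (and the outer $\log_2$) without materially affecting the scaling statement. I would handle this by noting that for large $K$ we have $\ln K - \ln 4 + \ln(1 - 1/K) = \ln K\bigl(1 + o(1)\bigr)$, so $\sqrt{\tfrac{\pi}{2}(\ln K - \ln 4 + \ln(1-1/K))} = \sqrt{\tfrac{\pi}{2}\ln K}\bigl(1 + o(1)\bigr)$, and this first-order approximation is compatible with the outer $\mathcal{O}(\cdot)$ used in the statement of Theorem~\ref{thm:su_ofdm_rate_scale}. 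Hence the approximation symbol in~\eqref{eq:su_ofdm_rate_approx} is justified.

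The hard part, such as it is, is not an obstacle but rather a matter of bookkeeping: ensuring that the approximation $\Phi(x) \approx \tfrac{1}{2}\bigl(1+\sqrt{1-e^{-2x^2/\pi}}\bigr)$ is tight precisely in the upper tail regime $x \to \infty$ that is relevant here (since $\Phi^{-1}(1-1/K) \to \infty$ as $K \to \infty$). Since the approximation in~\cite{Eidous_MS_2016} is designed to be accurate in both tails, this is automatic, and no further work is needed beyond the algebraic inversion described above.
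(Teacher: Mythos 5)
Your proof is correct and follows exactly the route the paper intends: the paper's entire justification for this corollary is to invoke the approximation $\Phi(x) \approx \tfrac{1}{2}\bigl(1+\sqrt{1-e^{-2x^2/\pi}}\bigr)$ from \cite{Eidous_MS_2016}, invert it at the argument $1-1/K$, and substitute the resulting $\sqrt{(\pi/2)\ln K}$ into \eqref{eq:su-ofdm-rate-law}, which is precisely what you do. The only blemish is a sign slip in your intermediate step --- the inversion gives $x^2 = \tfrac{\pi}{2}\bigl(\ln K - \ln 4 - \ln(1-1/K)\bigr)$, not $+\ln(1-1/K)$ --- which is immaterial since that term vanishes as $K \to \infty$ and is absorbed exactly as you argue.
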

Comparing the SU-OFDM performance given by the above equation with the performance in narrowband channels (see \eqref{eq:rate_mud_reflection_div_nb}, with $Q=1$), the main difference in the multi-carrier case is the presence of $\|\mathbf{a}\|_2$ and the dependence on the number of users as $\sqrt{\ln K}$ instead of $\ln K$. The $\sqrt{\ln K}$ dependence is a consequence of the upper-bounding technique used to obtain \eqref{eq:su_ofdm_rate_approx}; and apart from the $\|\mathbf{a}\|_2$ factor, the performances are similar in the two cases. 


\subsection{IRS Enhanced Multi-user Diversity in OFDMA Systems}

We now consider 
an OFDMA system, where, instead of allotting all the subcarriers to one of the users, each subcarrier is allotted to a single, possibly different user. On the other hand, a given user can be allotted one or more subcarriers.
We consider the same channel model as in Sec. \ref{sec:ch_model_ofdm}.

\subsubsection{Scheme for IRS-Enhanced Multi-user Diversity in OFDMA Systems}
The scheme is similar to the single-user OFDM system in Sec.~\ref{sec.su-ofdm}, except that the user scheduling is done on a per subcarrier basis instead of allotting  all the subcarriers to the user with the best sum rate across subcarriers. Recall that the channel coefficient of the $m$th subcarrier of user $k$ is denoted by $\tilde{h}_k[m]$. Then, the average sum rate under equal power allocation in the IRS enhanced OFDMA based multi-user diversity scheme is given by
\begin{equation}\label{eq:su-ofdm-rate-scale-intial}
R^{(K)}_{\text{OFDMA}} = \sum_{m=0}^{M-1} \log_2\left\{1+ \frac{\beta P}{M\sigma^2} \max\limits_{1 \leq k \leq K} |\tilde{h}_k[m]|^2 \right\}.
\end{equation} Thus, we have the following theorem that characterizes the average sum rate of an OFDMA system. 
\begin{theorem}\label{thm:ofdma_rate_law}
Consider an $N$-element IRS assisted OFDMA system with $M$ subcarriers,  a total power constraint $P$, and noise variance $\sigma^2$. Then, the average sum rate exploiting multi-user diversity \textcolor{black}{under equal power allocation}, $R_{\text{OFDMA}}^{(K)}$, scales~as
\begin{equation}\label{eq:ofdma-rate-law}
{\lim_{K\to \infty }} \!\!\left\{\!R^{(K)}_{\text{OFDMA}} - \!M\log_2\left(1+ \frac{\beta P}{M\sigma^2}(N+1)\ln K\right) \right\} = 0.
\end{equation}
\end{theorem}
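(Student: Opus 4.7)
The plan is to leverage extreme value theory (EVT) on a per-subcarrier basis and then aggregate across the $M$ subcarriers. From Sec.~\ref{sec:ch_model_ofdm}, the frequency-domain coefficient $\tilde{h}_k[m] \sim \mathcal{CN}(0,N+1)$ and is i.i.d.\ across users $k$ for each fixed subcarrier $m$. Hence, the squared magnitudes $\{|\tilde{h}_k[m]|^2\}_{k=1}^{K}$ form a collection of i.i.d.\ exponential random variables with mean $N+1$, and this is true for every $m \in \{0,\ldots,M-1\}$. Crucially, the analysis only requires per-subcarrier marginals, so any dependence across subcarriers (arising from the DFT of an $L$-tap channel) is immaterial to the argument.

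Next, for each fixed $m$, I would apply the EVT characterization of the maximum of i.i.d.\ exponentials (the same Lemma~\ref{EVT_1} invoked elsewhere), which yields
\begin{equation*}
\max_{1 \le k \le K} |\tilde{h}_k[m]|^2 - (N+1)\ln K \xrightarrow{d} G,
\end{equation*}
where $G$ is a Gumbel random variable whose scale does not depend on $K$. Dividing through by $(N+1)\ln K$ and invoking the continuous mapping theorem gives
\begin{equation*}
\frac{1}{(N+1)\ln K}\max_{1 \le k \le K}|\tilde{h}_k[m]|^2 \xrightarrow{\mathrm{a.s.}} 1,
\end{equation*}
so that the ratio of the per-subcarrier SNR to $\frac{\beta P}{M\sigma^2}(N+1)\ln K$ concentrates at unity.

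The third step is to push this convergence through the logarithm inside the sum in~\eqref{eq:su-ofdm-rate-scale-intial}. For each subcarrier $m$, I would write the per-subcarrier residual as
\begin{equation*}
\log_2\!\Bigl(1 + \tfrac{\beta P}{M\sigma^2}\max_k|\tilde{h}_k[m]|^2\Bigr) - \log_2\!\Bigl(1 + \tfrac{\beta P}{M\sigma^2}(N+1)\ln K\Bigr) = \log_2(1 + \epsilon_K^{(m)}),
\end{equation*}
where $\epsilon_K^{(m)} \to 0$ almost surely by the previous step. To convert this into convergence of the \emph{expected} rate (which is what $R^{(K)}_{\text{OFDMA}}$ represents), I would invoke a uniform integrability argument: the maximum of $K$ exponentials has a light Gumbel tail and finite moments of every order, so $\log_2(1+\epsilon_K^{(m)})$ is dominated by an integrable envelope and dominated convergence applies. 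This mirrors the moment-convergence step used in the proof of Theorem~\ref{thm:mulitple_pilot_rate} for the narrowband case.

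Finally, summing the $M$ identical per-subcarrier residuals produces the $M\log_2(\cdot)$ prefactor in~\eqref{eq:ofdma-rate-law}, and the aggregate residual tends to zero as $K \to \infty$. The main obstacle is the uniform integrability step that transfers almost-sure convergence to expectation convergence; once that is in place, the result follows immediately from EVT and the linearity of expectation over the $M$ subcarriers. Unlike the SU-OFDM proof, no Jensen-based upper bound is needed here, which is why the scaling retains the sharper $\ln K$ factor rather than $\sqrt{\ln K}$.
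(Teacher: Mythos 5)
Your proposal is correct and follows essentially the same route as the paper: the paper's proof simply notes that $|\tilde{h}_k[m]|^2$ are i.i.d.\ $\exp(1/(N+1))$ across users and applies Lemma~\ref{EVT_1} per subcarrier, exactly as in Appendix~\ref{app:mul_pilot_txn_rate_nb}, then sums over the $M$ subcarriers. Your additional remarks --- that cross-subcarrier correlation is immaterial because only per-subcarrier marginals enter, and that a uniform-integrability step is needed to pass from distributional convergence to convergence of the expected rate --- are correct refinements of details the paper leaves implicit, not a different argument.
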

\begin{proof}
The key step is to characterize the random variable $\max\nolimits_{1 \leq k \leq K} |\tilde{h}_k[m]|^2$ for large $K$. Since $|\tilde{h}_k[m]|^2 \widesim[2]{\text{i.i.d.}} \exp(1/(N+1))$, we can apply Lemma \ref{EVT_1}  to obtain the scaling law in \eqref{eq:ofdma-rate-law} in the same way as derived in Appendix~\ref{app:mul_pilot_txn_rate_nb}.
\end{proof}
\begin{remark}[\textcolor{black}{On the performance of OFDMA and SU-OFDM}] \label{rem:OFDMvsOFDMA}
The IRS assisted OFDMA scheme  outperforms the SU-OFDM scheme due to two reasons: 1) Since there are $M$ parallel  channels in the OFDMA scheme, and this offers additional selection/frequency diversity gain over  an SU-OFDM system. 
  2) While a given user may see different channel coefficients on  different subcarriers, the IRS configuration is common across all subcarriers. Thus, even in the asymptotic number of users, it is not possible for any user to be in  beamforming configuration on all the subcarriers in an SU-OFDM scheme. On the other hand, in the OFDMA scheme, since the setup boils down to the availability of $M$ parallel  channels, and when $K$ is large, it is possible for the IRS to be close to the beamforming configuration with high probability on all subcarriers, by scheduling different users on the different subcarriers. 
However, the feedback overhead in the OFDMA scheme is $M$ times that of SU-OFDM, since the BS needs to find the best user on each of the $M$ subcarriers. Note that, with OFDMA, one can still use a low feedback overhead timer- or splitting-based scheme \cite{suresh_NCC_2010,Shah_TCOM_2010} for identifying the best user to schedule for data transmission, but on a subcarrier-by-subcarrier basis. 
\end{remark}

\section{Numerical Results}\label{sec.results}
In this section, we validate the analytical results derived as well as quantify the relative performance of the different schemes proposed in the previous sections, through Monte Carlo simulations.  A single antenna BS is located at $(0,0)$ (in metres), the IRS is at $(0,250)$ and single antenna users are uniformly distributed in the rectangular region with diagonally opposite corners $(100,500)$ and  $(500,1000)$. The path losses are computed as $\beta = 1/d^\alpha$ where $d$ is the distance and $\alpha$ is the path loss exponent. We use $\alpha=2,2.8$ and $3.6$ in the BS-IRS, IRS-user and BS-user (direct) links, respectively \cite{Nadeem_WCL_2021}. 
\textcolor{black}{Further, we consider a BS transmitting with power $P = -10$~dBm and noise variance at the receiver $\sigma^2 = -117.83$~dBm, corresponding to a signal bandwidth of $400$~kHz at a temperature of $300$~K}.
Then, in the absence of the IRS, a user at the point closest to the BS experiences an average SNR of about $10.3$~dB, while the farthest user experiences an average SNR of $-1.9$ dB. 
The fading channels are randomly generated as per the distributions discussed in the previous sections.

We first evaluate the performance of the scheme described in Sec. \ref{sec:basic_scheme_nb}. In Fig. \ref{fig2:basic_scheme}, we plot the average throughput offered by a randomly configured IRS-assisted OC scheme operated using a proportional fair scheduler with $\tau = 5000$.  We compare the performance of the OC scheme against that of the beamforming-optimal scheme, given by \eqref{eq:basic_rand_2_opt}. The throughput of the OC system improves with the number of IRS elements and users in the system. On the other hand, the gap between the throughput of the OC scheme and that of the optimally configured IRS based scheme also increases with the number of IRS elements, in line with our discussion in Remark \ref{remark1_basic_rate_of_convergence}. We also see that the IRS assisted system significantly outperforms opportunistic scheduling in the absence of the IRS, when the BS is equipped with $N$ antennas~\cite[Sec.~III.A]{Viswanath_TIT_2002} (i.e., the same as the number of IRS elements used.)\footnote{The BS uses weights $\alpha_n(t)$ and phase $\theta_n(t)$ at the $n$th antenna at time $t$, leading to $h_{k}(t) = \sum_{n=1}^{N} \,\sqrt{\alpha_n(t)} \,e^{j\theta_n(t)}h_{nk}(t)$ as the effective channel gain. At each $t$,  $\alpha_n(t)$ and $\theta_n(t)$ are set randomly.} 
This is because the BS is constrained by a total radiated power. Hence, increasing the number of antennas reduces the transmit power per antenna, and results in a throughput that improves only marginally with the number of antennas at the BS. 
On the other hand, since the IRS uses passive reflective elements, the total received power at the user increases quadratically with the number of IRS elements under the optimal beamforming configuration (see, e.g., Theorem~\ref{prop:rate_irs_channel_aware} or \cite{Wu_GLOBECOM_2018}).

\begin{figure}[t]
\includegraphics[width=\linewidth]{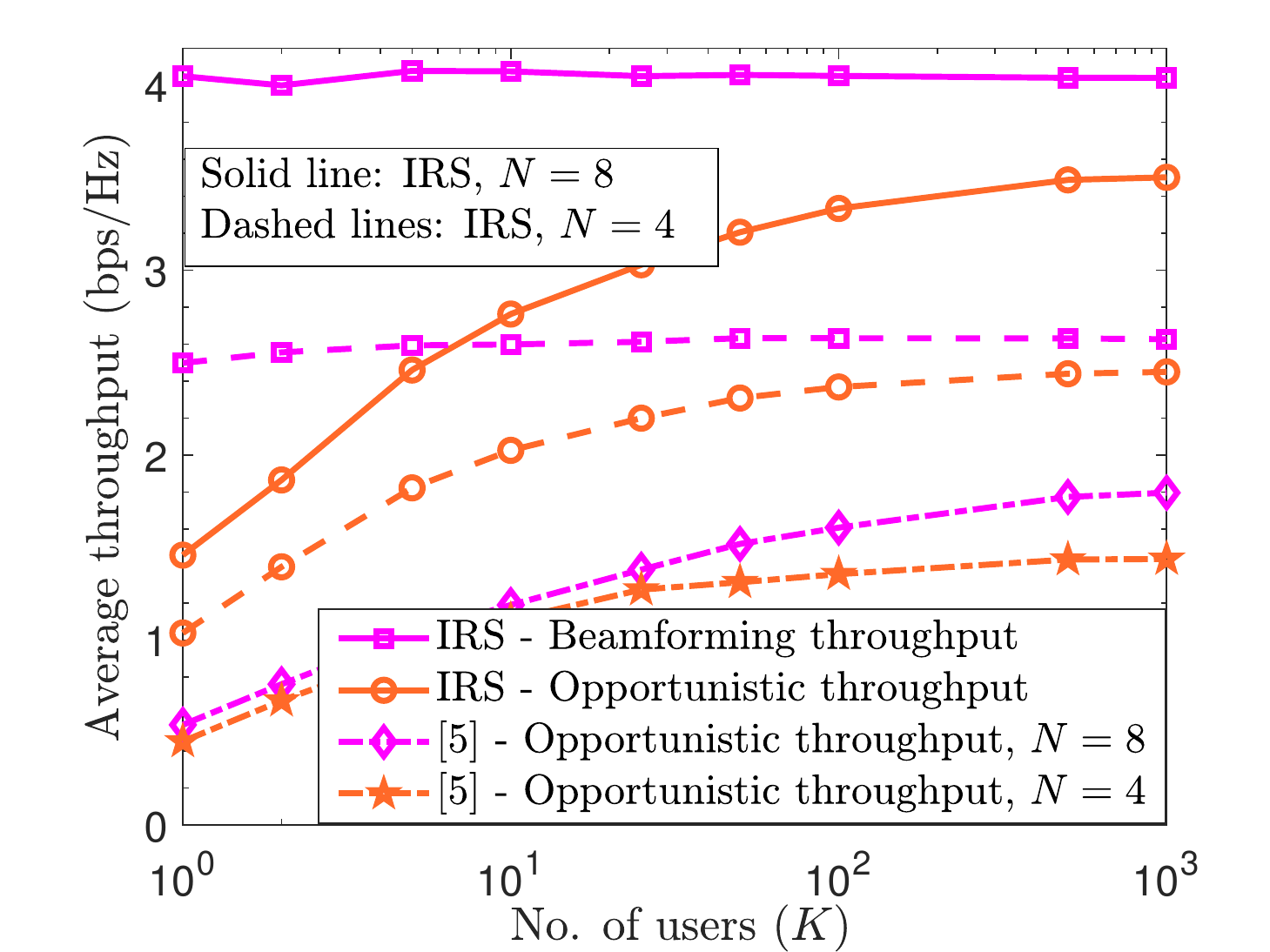}
\centering
\caption{Average throughput vs. the number of users compared with the opportunistic scheme in~\cite[Sec.III.A]{Viswanath_TIT_2002}.}
\label{fig2:basic_scheme}
\end{figure}

Next, in Figs. \ref{fig3:q_vary} and \ref{fig4:mud_sd_rate}, we evaluate the scheme in Sec.~\ref{sec:sel_div_scheme_nb}. In this experiment, we use $\zeta = 0.01$ in \eqref{eq:rate_mud_reflection_div_nb}.\footnote{\textcolor{black}{We note that $\zeta$ depends on the coherence time of the channel. It has been shown in~\cite{Lozano_EWC_2010} that an optimal $\zeta$ is around $0.01$ for channels with moderate fading rate operating at moderate SNR, in single antenna systems.}} Hence, the rate goes to zero when $Q=100$, since no symbols are left for data transmission. In Fig. \ref{fig3:q_vary}, we plot the throughput as a function of $Q$, the number of pilot transmissions. The optimal $Q^*$ that yields the best trade-off between the pilot overhead and \textcolor{black}{reflection} diversity gain, given by \eqref{eq:integer_optimal_q}, agrees with the integer $Q$ at which the throughput achieves its maximum. \textcolor{black}{In Fig.~\ref{fig4:mud_sd_rate}, we  compare the performance in terms of the achievable system throughput obtained using $Q=1$ (``\text{Opportunistic throughput, $Q=1$}") against that obtained by using $Q=Q^*$ pilot transmissions (``\text{Opportunistic throughput, $Q=Q^*$}''.) The figure  shows the additional gain due to the reflection diversity, particularly when the number of users is small. The gain is marginal when $K$ is large, partly because the throughput depends weakly on $Q$ since it scales as $\log\big(\ln (QK)\big)$, and partly because $Q^*$ itself reduces with $K$. In the same figure, we also plot the performance of the IRS assisted opportunistic system under equal path loss across users, and see that the simulations (``\text{Equal path loss - opp. throughput, $Q=Q^*$}") match with the theoretical result in Theorem~\ref{thm:mulitple_pilot_rate} (``\text{Theorem $1$, $Q=Q^*$}''.)}

\begin{figure}[t]
\includegraphics[width=\linewidth]{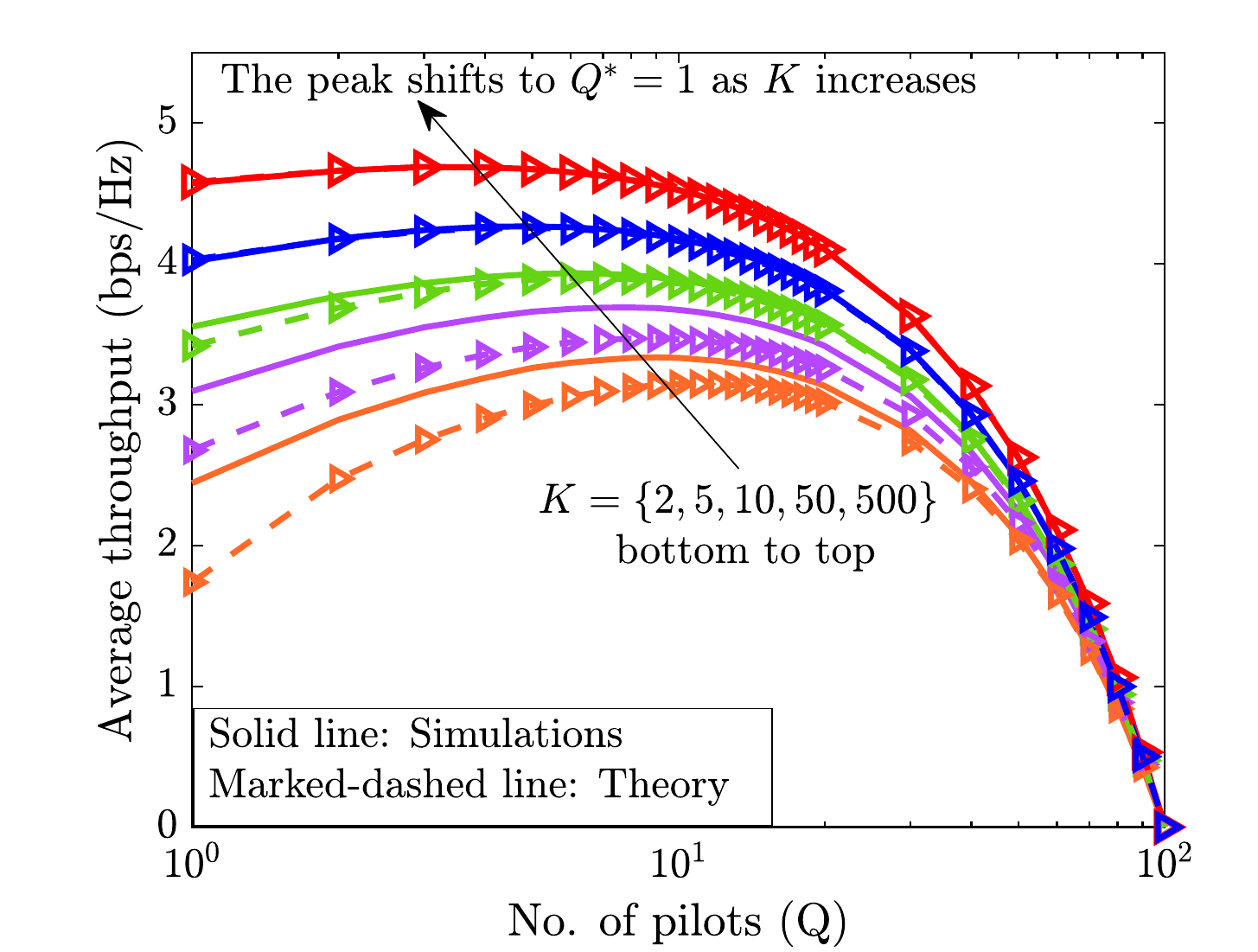}
\caption{Average throughput as a function of the number of pilot transmissions, for $N=8$.}
\label{fig3:q_vary}
\end{figure}
\begin{figure}[t]
\includegraphics[width=\linewidth]{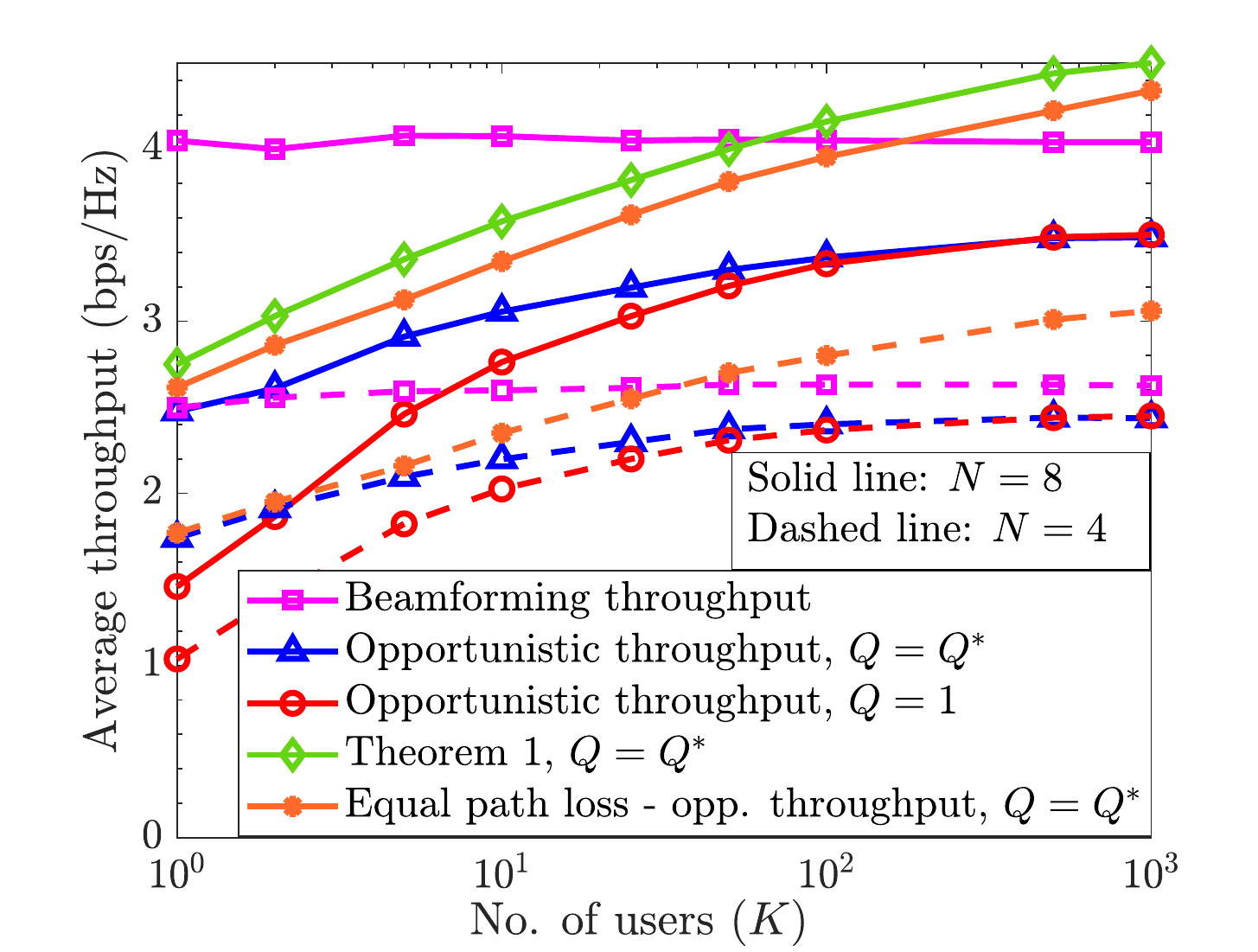}
\caption{Average throughput as a function of the number of users, with $Q^*$ pilot transmissions.}
\label{fig4:mud_sd_rate}
\vspace{-0.1cm}
\end{figure}

In Fig. \ref{fig5:steering_rate}, we  study the performance of \textcolor{black}{the channel model aware OC scheme}
as in Sec.~\ref{subsec:steering_model_scheme} and the IRS draws phase angles randomly as per \eqref{eq:steering_optimal_ditbn}, \textcolor{black}{with users' DoDs (at the IRS) being randomly and independently sampled from a uniform distribution in $[-40^\circ,40^\circ]$ with $\theta_A = 20^\circ$. Further, the users are located in the region as mentioned in paragraph~$1$ of this section}. We also use the uniform distribution to draw the phase angles at all the IRS elements independently, and show its performance in the same figure. We see that the performance of the OC system  improves dramatically when channel model and DoD statistics at the IRS are used in selecting the phase configurations. 

\begin{figure}[!t]
\includegraphics[width=\linewidth]{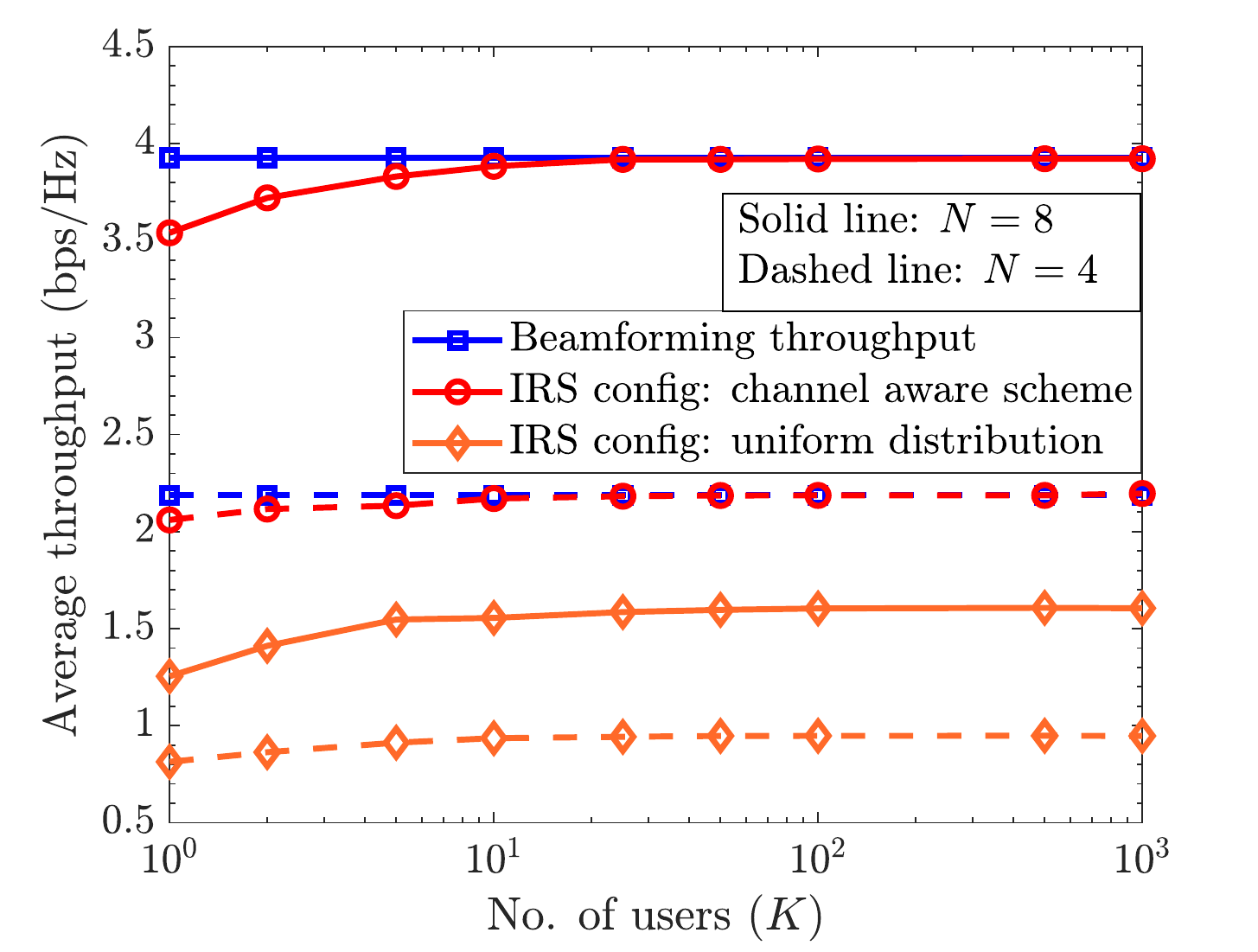}
\caption{Average throughput as a function of the number of users, for channel model aware scheme.}\label{fig5:steering_rate}
\vspace*{-0.1cm}
\end{figure}

Next, in Fig. \ref{fig6:rate_vs_IRS}, we investigate the performance gap between the randomly configured IRS based OC and the rate obtained from the beamforming configuration, as a function of number of IRS elements with and without the knowledge of DoD statistics at the IRS. The different curves correspond to the system having varying number of users. We see that, for a given number of users, the performance of the system with the IRS phase angles drawn exploiting the knowledge of the DoD statistics is very close to the performance of an optimized IRS even if the number of IRS elements is as large as $1024$. Furthermore, in this regime, even when the number of users in the system is as small as $50$, the OC performance is still close to the coherent beamforming rate. On the other hand, the performance of an IRS assisted OC scheme with the phase angles drawn independently from the uniform distribution becomes increasing worse relative to the beamforming rate as the number of IRS elements increases. Moreover, the effect of multi-user diversity is hardly evident in the latter case, when the number of IRS elements is large. In a nutshell, for a given number of users, the \textcolor{black}{channel model aware scheme} offers two benefits: 1) The rate  of the OC system remains close to the optimal beamforming rate even with a large number of IRS elements; 2) The effect of multi-user diversity is well captured, even with a small number of users.  

\begin{figure}[!t]
\includegraphics[width=\linewidth]{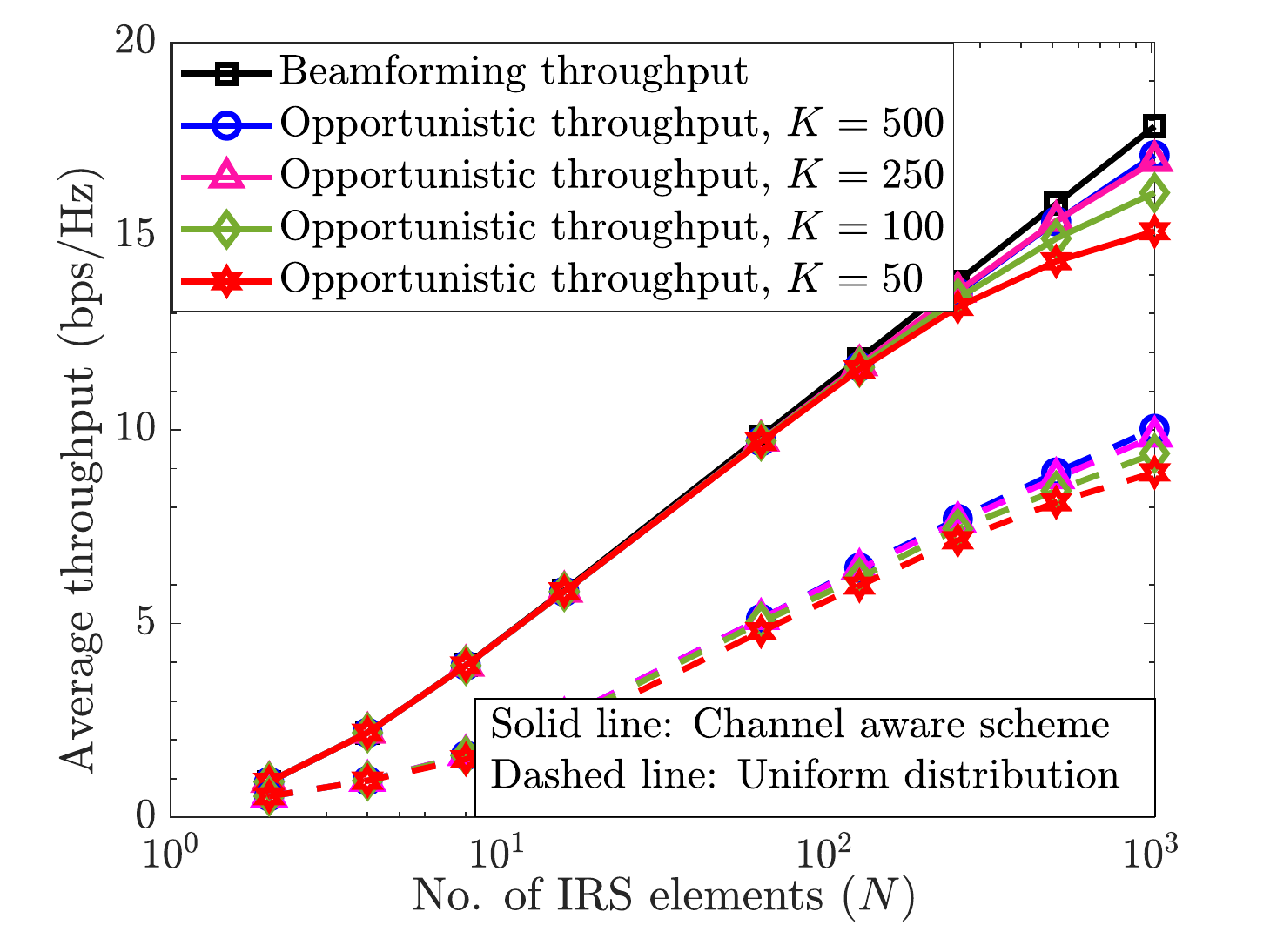}
\caption{Average throughput as a function of the number of IRS elements, for different number of users.}\label{fig6:rate_vs_IRS}
\vspace{-0.1cm}
\end{figure}

As the last experiment in this section, we consider the performance of IRS aided OC over wideband channels \textcolor{black}{as discussed in Sec.~\ref{sec.opp_schemes_wideband}}. \textcolor{black}{We fix the number of time domain channel taps to two different values:  $L=25$~\cite{Bjornson_spcup_2021} and $5$ with $\nu=1$, in \eqref{eq:pdp}, and perform communication through an OFDM system with $M=1024$ subcarriers and subcarrier spacing of $30$ kHz. As a result, the total system bandwidth is $30.72$ MHz which corresponds to a  total noise variance $\sigma^2 = -98.95 $~dBm at $300$~K. We choose a total power budget $P=24$~dBm at the BS and as a consequence, the nearest user experiences an average per subcarrier SNR of $11$~dB and the farthest user $-1.2$~dB, respectively.}
The channels are generated in an i.i.d.\ fashion across the users by setting the path loss coefficient to be equal for all users, such that the average SNR is $4.3$~dB, and the BS power is allotted divided across all the subcarriers. Before we look at the numerical performance of the schemes, we first ascertain the applicability of analytical rate scaling law in \eqref{eq:su-ofdm-rate-law} for the choice of $L=25$ at $\nu=1$ \textcolor{black}{in SU-OFDM systems}. To characterize the Gaussianity of the sum-term in the left hand side of \eqref{eq:sum_gaussian}, we compute its excess kurtosis, which measures how close a given distribution is to the Gaussian distribution \cite{Najim_book_2004}. The excess kurtosis, $\kappa$, of a random variable $X$ with mean $\mu$ is defined as 
\begin{equation}
\kappa = \frac{\mathbb{E}[|X-\mu|^4]}{\left(\mathbb{E}[|X-\mu|^2 \right)^2} - 3.
\end{equation}


\begin{table}[]
\centering
\caption{Excess kurtosis $\kappa$ as a function of the number of taps.}

\begin{tabular}{|c|c|c|c|c|c|c|c|c|}
\hline
$L$ & $1$ & $2$ & $5$ & $10$ & $20$ & $\mathbf{25}$ & $50$ & $100$ \\\hline
$\kappa$ & $5.96$ & $3.52$ & $1.5$ & $0.76$ & $0.35$ & $\mathbf{0.28}$ & $0.13$ & $0.07$ \\ \hline
\end{tabular}
\label{table:kurtosis}
\end{table}



\begin{figure*}[t]
	\centering
	\begin{subfigure}[t]{0.49\linewidth}
		\centering
		\includegraphics[width=\linewidth]{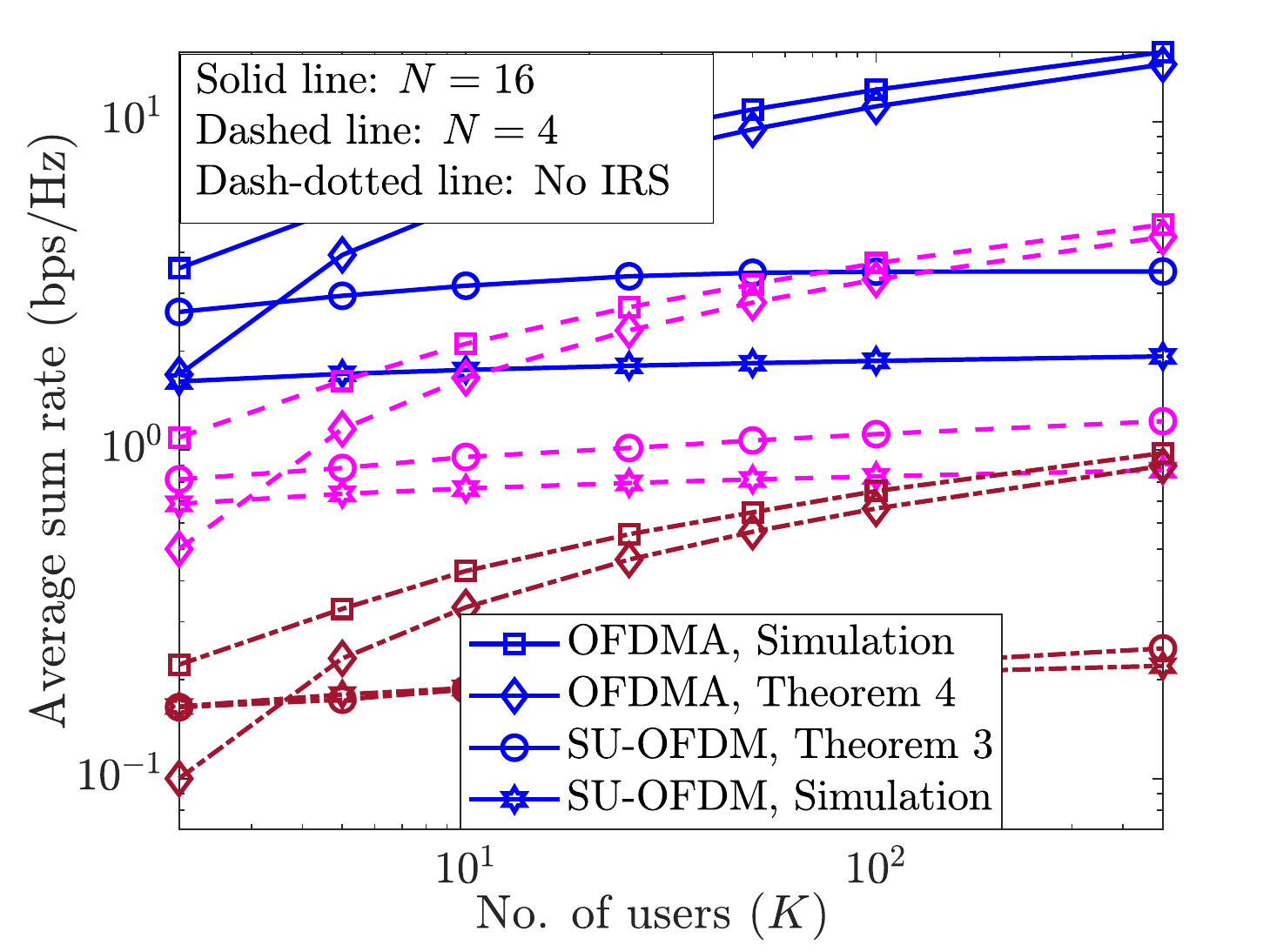}
		\caption{$L=25$.}
		\label{fig:ofdm-L-25}
	\end{subfigure}
	\begin{subfigure}[t]{0.49\linewidth}
		\centering
		\includegraphics[width=\linewidth]{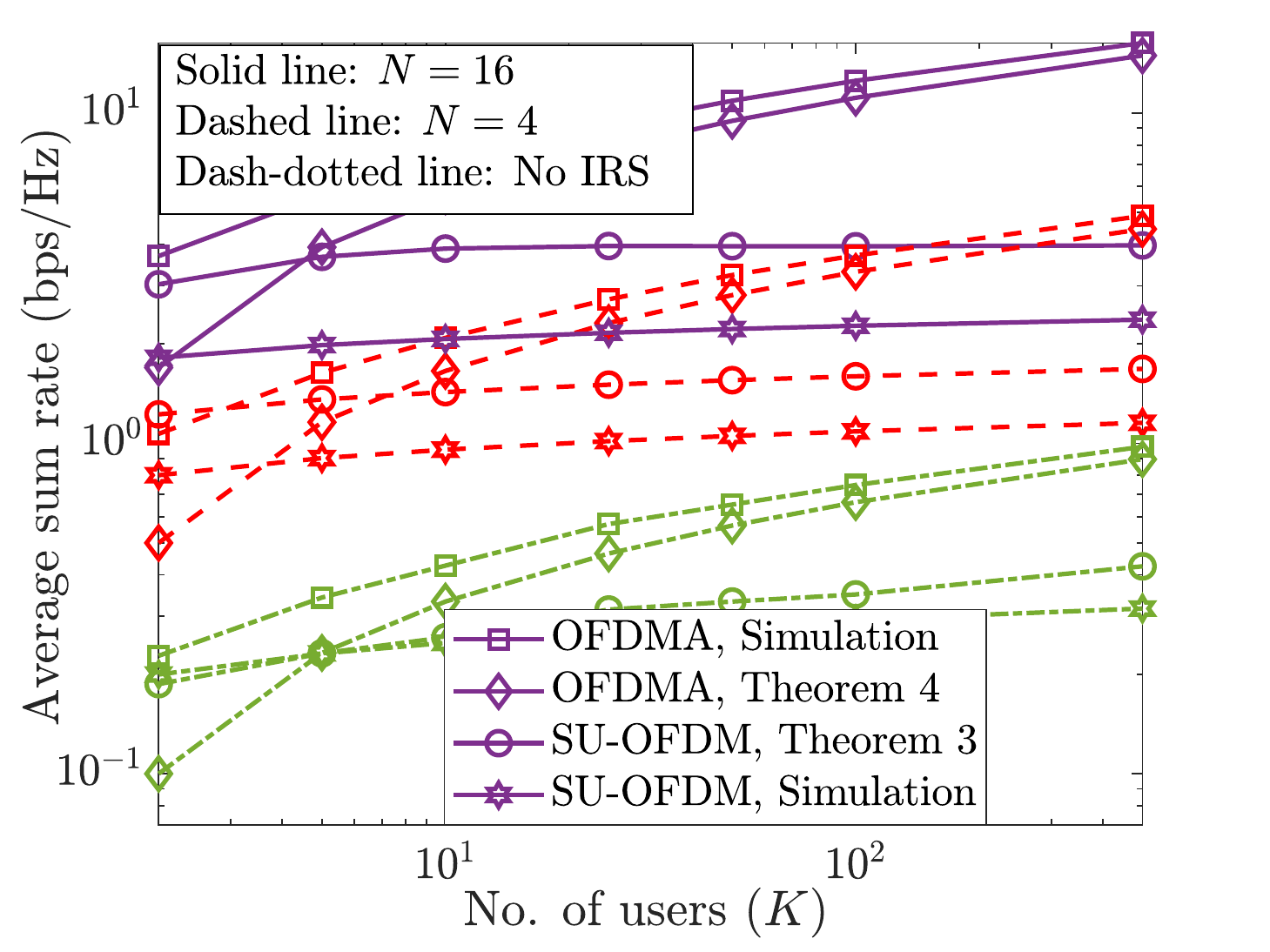}
		\caption{$L=5$.}
		\label{fig:ofdm-L-5}
	\end{subfigure}
	\caption{Average sum rate as a function of number of users, for an OFDM based communication system with $M=1024$ subcarriers.}
	\label{fig7:ofdm_opp}
\end{figure*}

For a Gaussian random variable, $\kappa = 0$. In \textcolor{black}{Table~\ref{table:kurtosis}, we list the excess kurtosis of the $L$-sum-term in \eqref{eq:sum_gaussian} as a function of $L$}. We see that, for $L=25$, we obtain an excess kurtosis of approximately $0.28$, \textcolor{black}{which is within $4\%$ of the distance between a Gaussian and exponential random variable (corresponding to $L=1$)} \textcolor{black}{and closer to the Gaussian random variable.} Thus, it is reasonable to consider that this sum-term is nearly  Gaussian distributed, making the scaling law in \eqref{eq:su-ofdm-rate-law} valid for~$L=25$. \textcolor{black}{Subsequently, we relax the requirement on large values for $L$, and study the validity of scaling law for smaller $L$ (particularly, at $L=5$.)}

\textcolor{black}{We present the OC performance of SU-OFDM and OFDMA for $L=25$ and $5$, respectively, in Figs.~\ref{fig:ofdm-L-25} and \ref{fig:ofdm-L-5}. Herein, we show the empirical and theoretical scaling of system throughputs under both the schemes. In particular, we have set the constant in the upperbound in~\eqref{eq:su-ofdm-rate-law} to $1$ for plotting the theoretical result.} \textcolor{black}{The throughput offered by the IRS assisted OFDMA is superior to that of SU-OFDM, in line with Remark~\ref{rem:OFDMvsOFDMA}. We also see that the performance of both OFDMA and SU-OFDM increase with $N$, in an IRS-assisted system exploiting multi-user diversity. However,  while the simulated and theoretical performance of OFDMA match well as $K$ increases, there is a gap between the two in the case of SU-OFDM. This is partly because the throughput analysis of SU-OFDM is an upper bound (see \eqref{eq:su-ofdm-jensen-parsevel}), and partly because the channels across the subcarriers become more disparate as the number of IRS elements increases, making the upper bound looser as $N$ increases. \textcolor{black}{We also observe that although the scaling law in Theorem~\ref{thm:su_ofdm_rate_scale} was derived assuming a large $L$, the expression captures the rate scaling performance even for moderate values of $L$ \textcolor{black}{such as $5$} (in Fig.~\ref{fig:ofdm-L-5}).} Nonetheless, the plot  shows that one can obtain a performance boost by deploying an IRS, even without using complex optimization algorithms, by merely obtaining multi-user diversity gains over randomly configured IRSs.}

\section{Conclusions}\label{sec.conclusion}
In this paper, we presented several opportunistic schemes in a single IRS aided setting for exploiting and enhancing the multi-user diversity gains, both in narrowband and wideband channels. The schemes completely avoid the need for CSI estimation and computationally expensive phase optimization, and require little or no communication from the BS to IRS. First, we saw that, in narrowband channels, a basic multi-user diversity scheme using a randomly configured IRS provides a performance boost over conventional systems as the number of users, $K$, gets large. In order to improve the rate of convergence of the opportunistic rate to the \textcolor{black}{optimal} rate (in terms of the number of users), we presented two alternative approaches \textcolor{black}{and analyzed their performances}: one where we obtained additional reflection diversity, and the other where we exploited the channel structure in IRS assisted systems. Both these schemes  improve the rate of convergence of the throughput from the OC schemes with the number of users. In particular, exploiting the channel structure allows us to significantly increase the number of IRS elements (\textcolor{black}{and also achieve the coherent beamforming throughput}) without requiring an exponentially larger number of users to achieve significant multi-user diversity gains. Finally, we considered IRS aided OC over a wideband channel in an OFDM system, and analyzed the performance of two different schemes, namely, SU-OFDM and OFDMA. Overall, IRS assisted OC schemes offer significant performance improvement over conventional schemes, while \textcolor{black}{incurring very low system overheads.} \textcolor{black}{Potential directions for future work could be to extend the setting to more general mmWave systems, multiuser MIMO-OFDM systems, and to develop OC schemes in an IRS-aided framework.}

{\appendices

\color{black}	
\section{Proof of the Theorem \ref{thm:mulitple_pilot_rate}}\label{app:mul_pilot_txn_rate_nb}
The proof of the theorem uses the following lemma on the extreme values of i.i.d.\ random variables.
\begin{lemma}\label{EVT_1} (\!\!\cite{Viswanath_TIT_2002}). 
Let $z_1,\ldots, z_K$ be i.i.d. random variables with a common cumulative distribution function (cdf) $F(\cdot)$ and probability density function (pdf) $f(\cdot)$ that  satisfy $F(z) < 1$ and is twice differentiable for all $z$. Let the corresponding hazard function, $\Omega(z) \triangleq \left[ {f(z)}\over {{1-F(z)}} \right]$ be such that
\begin{equation}\label{eq:evt_1}
\lim_{z \rightarrow \psi} \frac{1}{\Omega(z)} = c > 0,
\end{equation}
for $\psi \triangleq \sup\{z:F(z)<1\}$ and some constant $c$. Then, 
\textcolor{black}{$\max_{1 \le k \le K} z_k - l_K$} 
converges in distribution to a Gumbel random variable with cdf  $e^{(-e^{-x/c})}$, where $F(l_K)=1-\frac{1}{K} $.

\end{lemma}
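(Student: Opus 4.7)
\textbf{Proof proposal for Lemma~\ref{EVT_1}.} The plan is to establish convergence of the cdf of $M_K \triangleq \max_{1\le k\le K} z_k - l_K$ to $e^{-e^{-x/c}}$ pointwise, which by Portmanteau (or simply by continuity of the Gumbel cdf) implies convergence in distribution. By independence,
\begin{equation}
\Pr(M_K \le x) \;=\; \Pr\!\left(\max_k z_k \le x + l_K\right) \;=\; F(x + l_K)^{K}.
\end{equation}
Taking logarithms and using the expansion $\log(1-u) = -u + O(u^2)$ as $u \to 0$, it will suffice to prove that, for each fixed $x \in \mathbb{R}$,
\begin{equation}
K\bigl(1 - F(x + l_K)\bigr) \;\longrightarrow\; e^{-x/c}, \qquad K \to \infty,
\end{equation}
since then $K\log F(x+l_K) \to -e^{-x/c}$ and hence $F(x+l_K)^K \to \exp(-e^{-x/c})$.

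The first step is to verify the auxiliary fact $l_K \to \psi$ as $K \to \infty$. This follows from the defining equation $F(l_K) = 1 - 1/K$ together with $F(z) < 1$ for $z < \psi$ and right-continuity of $F$. The main step is then a ratio analysis: writing $G(z) \triangleq 1 - F(z)$, we have $G(l_K) = 1/K$, so it is enough to show that
\begin{equation}
\frac{G(x + l_K)}{G(l_K)} \;\longrightarrow\; e^{-x/c}.
\end{equation}
For this, I would use the identity $\frac{d}{dz}\log G(z) = -f(z)/G(z) = -\Omega(z)$, valid wherever $f$ is defined and $G(z) > 0$. Integrating from $l_K$ to $l_K + x$ (assume $x > 0$; the case $x < 0$ is symmetric after handling the sign of the interval) yields
\begin{equation}
\log \frac{G(x + l_K)}{G(l_K)} \;=\; -\int_{l_K}^{l_K + x} \Omega(u)\, du.
\end{equation}
By hypothesis, $1/\Omega(u) \to c$ as $u \to \psi$, i.e., $\Omega(u) \to 1/c$. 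Since $l_K \to \psi$, for any $\delta > 0$ and all sufficiently large $K$ we have $|\Omega(u) - 1/c| < \delta$ uniformly for $u \in [l_K, l_K + x]$, so the integral converges to $x/c$ and the ratio $G(x+l_K)/G(l_K)$ converges to $e^{-x/c}$, as required. (A separate argument handles the case $\psi = +\infty$ versus $\psi < \infty$, but both are covered by this monotone approach since only the tail behavior of $\Omega$ near $\psi$ matters.)

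The main obstacle I anticipate is ensuring the interchange of limit and integral is legitimate near the right endpoint $\psi$, especially when $\psi < \infty$ and $l_K + x$ may exceed $\psi$ for large $K$. This is handled by noting that $F(l_K) < 1$ forces $l_K < \psi$, and the twice differentiability of $F$ together with the convergence $1/\Omega \to c > 0$ rules out pathological accumulation of mass; one can then invoke the dominated convergence theorem (or directly uniform control of $\Omega$ on shrinking intervals near $\psi$) to justify the limiting integral. The remaining pieces — deducing pointwise convergence of $F(x+l_K)^K$ from the ratio limit, and upgrading to convergence in distribution — are routine given that $e^{-e^{-x/c}}$ is continuous everywhere.
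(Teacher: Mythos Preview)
The paper does not actually prove Lemma~\ref{EVT_1}: it is stated with a citation to \cite{Viswanath_TIT_2002} and then merely \emph{applied} (in Appendix~\ref{app:mul_pilot_txn_rate_nb}) to the exponential random variables $|h_{k,q}|^2$. So there is no proof in the paper to compare against; your proposal supplies exactly the argument the paper omits.

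Your argument is the standard and correct one. One remark simplifies the ``main obstacle'' you anticipate. Under the hypothesis $\lim_{z\to\psi}1/\Omega(z)=c>0$, the case $\psi<\infty$ is in fact vacuous: near $\psi$ you would have $(\log G)'(z)=-\Omega(z)\to -1/c$, forcing $G(z)\sim A e^{-z/c}$ and hence $G(\psi^-)>0$, contradicting the definition of $\psi$. Thus $\psi=+\infty$ automatically, $l_K\to\infty$, the interval $[l_K,l_K+x]$ always lies in the support, and your uniform control of $\Omega$ on that interval goes through without any dominated-convergence subtlety. With that observation, your reduction $K(1-F(x+l_K))\to e^{-x/c}$ via $\log\bigl(G(l_K+x)/G(l_K)\bigr)=-\int_{l_K}^{l_K+x}\Omega(u)\,du\to -x/c$ is clean and complete.
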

In words, the lemma states that, asymptotically, the maximum of $K$ i.i.d. random variables grows like $l_K$. From Theorem~\ref{thm:mulitple_pilot_rate}, it is clear that, $|h_{k,q}|^2 \sim \exp\left(\frac{1}{\beta(N+1)}\right)$ for all $k \in [K]$ and $q \in [Q]$. Further, since the phase angles are independently generated at the IRS during each pilot transmission, we have that $ \grave{h} \triangleq 	|h_{k,q}|^2$ form a set of $QK$ i.i.d. exponential random variables with mean $\beta(N+1)$. Thus we have the pdf, $f_{\grave{h}}(h) = \frac{1}{\beta(N+1)} e^{-\frac{h}{\beta(N+1)}}$ for $h \geq 0$ and hence the cdf, $F_{\grave{h}}(h) = 1-e^{-\frac{h}{\beta(N+1)}}$ for $h \geq 0$ and $0$ otherwise. Also, here $\psi=\infty$. Thus, we have, 
\begin{equation}
\lim_{h \rightarrow \infty} \frac{1}{\Omega(h)}  = \lim_{h \rightarrow \infty} {\left(\frac{e^{-\frac{h}{\beta(N+1)}}}{\frac{1}{\beta(N+1)}e^{-\frac{h}{\beta(N+1)}}}\right)}^{-1} \!\!= \frac{1}{\beta(N+1)} > 0.
\end{equation} 
Hence, by virtue of the above lemma, we have $l_{QK} = F^{-1}\left(1 - \frac{1}{QK}\right)$ and solving, we get $l_{QK} = \beta(N+1)\ln(QK)$, and applying the lemma to \eqref{eq:mul_pilot_intial_rate}, we get the desired result.

\section{Proof of the Theorem \ref{thm:su_ofdm_rate_scale}}\label{app:su_ofdm_rate}
 We now use the fact that $L$ is large and invoke the following version of the central limit theorem (CLT)~\cite{Billingsley_book_1995}:
\begin{lemma}[Lyapounov's Central Limit Theorem]
Suppose that $X_1, X_2,\ldots, X_n$ form a sequence of independent random variables such that $\forall i \in [n]$, define $\mathbb{E}[X_i] \triangleq \mu_i$, $\sigma_i^2 \triangleq \mathbb{E}[|X_i|^2]$ and $s_n^2 \triangleq \sum_{i=1}^{n}\sigma_i^2$. If for some $\delta >0$,
\begin{equation}\label{eq:LCLT_condition}
\mathcal{L} \triangleq \lim_{n \rightarrow \infty} \sum_{i=1}^{n} \frac{1}{s_n^{2+\delta}}\mathbb{E}\left[|X_i-\mu_i|^{2+\delta}\right] = 0,
\end{equation} then
\begin{equation}
\frac{1}{s_n}\sum_{i=1}^{n} \left(X_i - \mu_i \right) \xrightarrow[n \rightarrow \infty]{d} \mathcal{CN}(0,1),
\end{equation}
where $\xrightarrow[n \rightarrow \infty]{d}$ stands for convergence in distribution.
\end{lemma}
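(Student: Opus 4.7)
The plan is to derive Lyapounov's CLT as a corollary of the more classical Lindeberg--Feller CLT (a standard result, e.g., in Billingsley \cite{Billingsley_book_1995}), by showing that the Lyapounov condition \eqref{eq:LCLT_condition} is strictly stronger than Lindeberg's condition. Recall that the Lindeberg condition states that for every $\epsilon > 0$,
\begin{equation*}
\mathcal{L}_{\epsilon}(n) \triangleq \frac{1}{s_n^2} \sum_{i=1}^n \mathbb{E}\!\left[|X_i - \mu_i|^2 \, \mathbf{1}\{|X_i - \mu_i| > \epsilon s_n\}\right] \to 0 \text{ as } n \to \infty,
\end{equation*}
and that, whenever it holds, the Lindeberg--Feller theorem yields $\frac{1}{s_n}\sum_{i=1}^n (X_i - \mu_i) \xrightarrow{d} \mathcal{N}(0,1)$ (if the $X_i$ are complex-valued, as the notation $\mathcal{CN}(0,1)$ in the lemma suggests, one reduces to the real case by applying the theorem componentwise to real and imaginary parts).

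The crux of the argument is the implication \emph{Lyapounov} $\Rightarrow$ \emph{Lindeberg}, and this is a one-line truncation bound. Fix $\epsilon > 0$. On the event $\{|X_i - \mu_i| > \epsilon s_n\}$, we have $|X_i - \mu_i|^\delta > (\epsilon s_n)^\delta$, hence
\begin{equation*}
|X_i - \mu_i|^2 \, \mathbf{1}\{|X_i - \mu_i| > \epsilon s_n\} \;\le\; \frac{|X_i - \mu_i|^{2+\delta}}{(\epsilon s_n)^\delta}.
\end{equation*}
Taking expectations, summing over $i=1,\ldots,n$, and dividing by $s_n^2$ gives
\begin{equation*}
\mathcal{L}_{\epsilon}(n) \;\le\; \epsilon^{-\delta} \cdot \frac{1}{s_n^{2+\delta}} \sum_{i=1}^n \mathbb{E}\!\left[|X_i - \mu_i|^{2+\delta}\right] \;=\; \epsilon^{-\delta}\, \mathcal{L}(n).
\end{equation*}
The Lyapounov hypothesis \eqref{eq:LCLT_condition} forces $\mathcal{L}(n) \to 0$, so $\mathcal{L}_\epsilon(n) \to 0$ for every $\epsilon>0$. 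This is precisely Lindeberg's condition, and the conclusion follows from Lindeberg--Feller.

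Because the reduction is essentially trivial, there is no serious technical obstacle provided the Lindeberg--Feller theorem is cited as a black box. A self-contained alternative, should one wish to avoid that dependency, would be a direct characteristic function proof: write $\phi_{Y_n}(t) = \prod_{i=1}^n \phi_{X_i - \mu_i}(t/s_n)$, Taylor-expand each factor as $1 - \sigma_i^2 t^2/(2 s_n^2) + r_i(t/s_n)$ with remainder bound $|r_i(u)| \lesssim |u|^{2+\delta}\,\mathbb{E}|X_i-\mu_i|^{2+\delta}$, and invoke L\'evy's continuity theorem after showing $\phi_{Y_n}(t) \to e^{-t^2/2}$. In that route, the main subtlety is to justify uniform expansion of the logarithm, which requires the auxiliary fact $\max_i \sigma_i^2 / s_n^2 \to 0$; this itself follows from Lyapounov's condition via H\"older's inequality, since $\sigma_i^2 \le \big(\mathbb{E}|X_i - \mu_i|^{2+\delta}\big)^{2/(2+\delta)}$ and thus $\max_i \sigma_i^2 / s_n^2 \le \mathcal{L}(n)^{2/(2+\delta)} \to 0$.
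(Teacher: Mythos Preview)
Your argument is correct and is the standard textbook derivation: bound the Lindeberg sums by the Lyapounov sums via the truncation inequality and then invoke Lindeberg--Feller. There is nothing to object to mathematically.

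However, there is no proof in the paper to compare against. The lemma is stated in Appendix~\ref{app:su_ofdm_rate} purely as a quoted result from \cite{Billingsley_book_1995} and is used as a black box to justify the Gaussian approximation in~\eqref{eq:sum_gaussian}; the paper does not attempt to prove it. So your proposal goes beyond what the paper does: you supply an actual (and correct) proof where the paper simply cites one.
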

To check whether the random variables $\left\{ |h_{k,l}|^2 \right\}_l$ satisfy \eqref{eq:LCLT_condition}, let $\delta=1$. Then $\mathbb{E}\left[|X_i-\mu_i|^{3}\right] = 2\left((N+1)a_l\right)^3$. Further, $s_n^2 = \sum_{i=1}^{n}a_l^2(N+1)^2$, hence $s_n^3 = \left(\|\mathbf{a}\|_2(N+1) \right)^3$. Thus, $\mathcal{L} = \lim_{n \rightarrow \infty} 2\frac{\|\mathbf{a}\|_3^3}{\|\mathbf{a}\|_2^3}$. We can lower and upper bound the ratio in the right hand side as follows: 
\begin{equation*}
  2\frac{n \min\limits_{i \in [n]} a_i^3}{n^{3/2}\max\limits_{i \in [n]} a_i^3} \leq  2\frac{\|\mathbf{a}\|_3^3}{\|\mathbf{a}\|_2^3} \leq 2\frac{n \max\limits_{i \in [n]} a_i^3}{n^{3/2}\min\limits_{i \in [n]} a_i^3}.
\end{equation*} 
Hence, when $n \longrightarrow \infty$ and $\|\mathbf{a}\| < \infty$, both lower and upper bounds go to zero and thus by the sandwich theorem, $\mathcal{L} = 0$. Thus, the given exponential random variables satisfy the condition in the lemma. Therefore, we~have, 
\begin{equation}
 \frac{1}{(N+1)\|\mathbf{a}\|_2} \sum_{l=1}^{L} \left(|h_{k,l}|^2-(N+1)a_l\right) \xrightarrow[L \rightarrow \infty]{d} \mathcal{CN}(0,1),
\end{equation} 
which implies
\begin{equation}\label{eq:sum_gaussian}
\sum_{l=1}^{L} |h_{k,l}|^2 \xrightarrow[L \rightarrow \infty]{d} (N+1)\left\{1 + \|\mathbf{a}\|_2 \bar{h}_k\right\},
\end{equation} where $\bar{h}_k \sim \mathcal{CN}(0,1)$. Thus, \eqref{eq:su-ofdm-jensen-parsevel} can be rewritten as, 
\begin{multline}\label{eq:su_ofdm_intial_evt}
	R^{(K)}_{\text{SU-OFDM}} \leq \log_2\left(1+ \frac{\beta P}{\sigma^2}(N+1) \right. \\ \left.\times \left[ 1+ \|\mathbf{a}\|_2 \left\{\max\limits_{1 \leq k \leq K} \bar{h}_k\right\}\right]  \right).
\end{multline} 
To characterize the extreme value of $K$ i.i.d. Gaussian random variables, we note that Lemma \ref{EVT_1} cannot be used as \eqref{eq:evt_1} is not satisfied by Gaussian random variables. Instead, we use another lemma to characterize the extreme values of i.i.d. Gaussian random variables from~\cite{Arnold_book_2007}.

\begin{lemma}[Von Mises' sufficient condition for weak convergence of extreme values]\label{EVT_2}
Let $X_1, X_2, \ldots, X_K$ be i.i.d. random variables and define $M_K \triangleq \max\{X_1, X_2, \ldots, X_K\}$. Let $F(x)$ be an absolutely continuous cdf with $f(x)$ being the corresponding pdf. Let the corresponding hazard function be $\Omega(x) \triangleq \left[ {f(x)}\over {{1-F(x)}} \right]$ and let $\psi = \sup\{ x: F(x)<1\}$. If
\begin{equation}\label{eq:evt2_condition}
\lim_{x \rightarrow \psi} \frac{d}{dx}\left(\frac{1}{\Omega(x)}\right) = 0,
\end{equation} then,
\begin{equation}
 M_K - l_K \xrightarrow[K \rightarrow \infty]{d} G,
\end{equation} where $G$ is a Gumbel random variable with cdf, $e^{(-e^{-x/c})}$ and $l_K$ is given by $F(l_K) = 1-\frac{1}{K}$ for some constant $c >0$.
\end{lemma}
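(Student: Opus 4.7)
The plan is to work directly from the identity $P(M_K \le z) = F(z)^K$, which yields $P(M_K - l_K \le x) = F(l_K + x)^K$, and to show that the right-hand side tends to $\exp(-e^{-x/c})$ as $K \to \infty$. Taking logarithms and using that $F(l_K + x) \to 1$ (since $l_K \to \psi$ by the defining relation $F(l_K) = 1 - 1/K$) together with the expansion $\log(1-u) = -u + O(u^2)$, the problem reduces to showing that $K\bigl(1 - F(l_K + x)\bigr) \to e^{-x/c}$. Since $K\bigl(1 - F(l_K)\bigr) = 1$ by construction, this is equivalent to proving the ratio convergence
\begin{equation*}
\frac{1 - F(l_K + x)}{1 - F(l_K)} \;\longrightarrow\; e^{-x/c}.
\end{equation*}

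Next, I would invoke the hazard-function identity $\Omega(t) = -\tfrac{d}{dt}\log\bigl(1 - F(t)\bigr)$, valid for absolutely continuous $F$, to express the ratio as $\exp\!\bigl(-\int_{l_K}^{l_K + x} \Omega(t)\, dt\bigr)$. The target then becomes
\begin{equation*}
\int_{l_K}^{l_K + x} \Omega(t)\, dt \;\longrightarrow\; \frac{x}{c} \quad \text{as } K \to \infty,
\end{equation*}
where $c$ is identified as the limit of $1/\Omega(l_K)$ as $K \to \infty$; the Von Mises hypothesis implicitly ensures that such a finite positive limit exists in the regime relevant to the lemma.

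The decisive step is to exploit the hypothesis $\frac{d}{dx}(1/\Omega(x)) \to 0$ as $x \to \psi$. Writing
\begin{equation*}
\frac{1}{\Omega(l_K + t)} - \frac{1}{\Omega(l_K)} = \int_0^t \frac{d}{ds}\!\left(\frac{1}{\Omega(l_K + s)}\right) ds,
\end{equation*}
the integrand on the right vanishes as $K \to \infty$ because $l_K + s \to \psi$. Hence $1/\Omega(l_K + t) \to c$, and consequently $\Omega(l_K + t) \to 1/c$, on any compact $t$-interval. A dominated-convergence argument then delivers $\int_{l_K}^{l_K + x} \Omega(t)\, dt \to x/c$, and unwinding the chain of reductions completes the proof.

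The main obstacle is making this last step genuinely rigorous: both endpoints of the integration interval $[l_K, l_K + x]$ drift to the boundary $\psi$, so mere pointwise convergence of $\Omega$ at interior points is insufficient. The standard remedy is to show that the Von Mises condition forces $1/\Omega$ to be slowly varying in Karamata's sense near $\psi$, which promotes the pointwise limit to \emph{uniform} convergence on the sliding interval and thereby justifies passing the limit inside the integral. This uniformity is the crux of the argument, and once established the Gumbel form of the limit law follows mechanically.
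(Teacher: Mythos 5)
The paper does not prove this lemma at all: it is quoted verbatim as a known result from the order-statistics literature (the citation to Arnold et al.), so there is no in-paper proof to compare against. Judged on its own, your argument is the standard direct proof of the Von Mises sufficient condition: reduce $F(l_K+x)^K\to\exp(-e^{-x/c})$ to $K\bigl(1-F(l_K+x)\bigr)\to e^{-x/c}$, rewrite the tail ratio as $\exp\bigl(-\int_{l_K}^{l_K+x}\Omega(t)\,dt\bigr)$ via $\Omega=-\tfrac{d}{dt}\log(1-F)$, and use the hypothesis to control $\Omega$ on the sliding interval. That skeleton is correct, and your worry at the end is overblown: the hypothesis is a limit as $x\to\psi$, so for any $\varepsilon>0$ the derivative of $1/\Omega$ is bounded by $\varepsilon$ \emph{uniformly} on the entire tail $(x_0,\psi)$; once $l_K>x_0$, the fundamental theorem of calculus gives $\sup_{|s|\le|x|}\bigl|1/\Omega(l_K+s)-1/\Omega(l_K)\bigr|\le\varepsilon|x|$ directly. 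No Karamata slow-variation machinery is needed — a two-line $\varepsilon$ argument closes the uniformity issue.

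The one genuine soft spot is your claim that the Von Mises hypothesis ``implicitly ensures'' that $1/\Omega(l_K)$ converges to a finite positive constant $c$. It does not: $\tfrac{d}{dx}(1/\Omega)\to 0$ is compatible with $1/\Omega\to 0$ or $1/\Omega\to\infty$. The general theorem therefore concludes $\bigl(M_K-l_K\bigr)\Omega(l_K)\xrightarrow{d}G$ with $G$ standard Gumbel, i.e., the scale $a_K=1/\Omega(l_K)$ must be kept as a $K$-dependent normalizer. This is not a pedantic point here, because in the only place the paper invokes the lemma — Gaussian tails in Appendix B — one has $\Omega(x)\sim x$, so $1/\Omega(l_K)\to 0$ and no positive $c$ exists; the unnormalized difference $M_K-l_K$ then converges to the degenerate limit $0$, which is in fact what the paper's scaling argument actually uses. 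So your proof is faithful to the lemma as (loosely) stated, but to be rigorous you should either carry the normalization $a_K$ through the argument or add the hypothesis $1/\Omega(x)\to c\in(0,\infty)$ explicitly and note that the Gaussian application corresponds to the degenerate case $c=0$.
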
 
As before, the result shows that the extreme value of $K$ i.i.d. random variables satisfying \eqref{eq:evt2_condition} grows like $l_K$, asymptotically. In what follows, we check the applicability of the lemma to $K$ i.i.d. Gaussian random variables. Clearly, $\psi = \infty$. Let $\Phi(x)$ and $f(x)$ be the cdf and pdf of a standard complex normal random variable. We then have,
\begin{equation}\label{eq:evt2_check_gaussian}
\frac{\rm d}{{\rm d}x}\left(\frac{1-\Phi(x)}{f(x)} \right) = \frac{(1-\Phi(x))x}{f(x)} - 1 = \sqrt{2}x\frac{Q(x)}{f(x)} - 1,
\end{equation} where $Q(x)$ is the standard $Q$ function. But we know that,
\begin{equation}\label{eq:Q_bound}
\frac{1}{\sqrt{2\pi}x} e^{-\frac{x^2}{2}}\left(1-\frac{1}{x^2}  \right) \leq Q(x) \leq \frac{1}{\sqrt{2\pi}x} e^{-\frac{x^2}{2}}.
\end{equation} Using \eqref{eq:Q_bound} to lower and upper bound \eqref{eq:evt2_check_gaussian} and taking the limit $x$ to $\infty$ and applying the sandwich theorem, it is straightforward to show that \eqref{eq:evt2_condition} is satisfied by Gaussian random variables. Thus, we have that, $l_K = \Phi^{-1}\left(1-\frac{1}{K} \right)$. Substituting it in \eqref{eq:su_ofdm_intial_evt} yields the desired result in~\eqref{eq:su-ofdm-rate-law}.

}

%
%
\bibliographystyle{IEEEtran}
\bibliography{IEEEabrv,IRS_ref_short}

\begin{thebibliography}{10}
\providecommand{\url}[1]{#1}
\csname url@samestyle\endcsname
\providecommand{\newblock}{\relax}
\providecommand{\bibinfo}[2]{#2}
\providecommand{\BIBentrySTDinterwordspacing}{\spaceskip=0pt\relax}
\providecommand{\BIBentryALTinterwordstretchfactor}{4}
\providecommand{\BIBentryALTinterwordspacing}{\spaceskip=\fontdimen2\font plus
\BIBentryALTinterwordstretchfactor\fontdimen3\font minus
  \fontdimen4\font\relax}
\providecommand{\BIBforeignlanguage}[2]{{%
\expandafter\ifx\csname l@#1\endcsname\relax
\typeout{** WARNING: IEEEtran.bst: No hyphenation pattern has been}%
\typeout{** loaded for the language `#1'. Using the pattern for}%
\typeout{** the default language instead.}%
\else
\language=\csname l@#1\endcsname
\fi
#2}}
\providecommand{\BIBdecl}{\relax}
\BIBdecl

\bibitem{Renzo_EURASIP_2019}
M.~Di~Renzo, M.~Debbah, D.-T. Phan-Huy, A.~Zappone, M.-S. Alouini, C.~Yuen,
  V.~Sciancalepore, G.~C. Alexandropoulos, J.~Hoydis, H.~Gacanin \emph{et~al.},
  ``Smart radio environments empowered by reconfigurable {AI} meta-surfaces: An
  idea whose time has come,'' \emph{EURASIP J. Wireless Commun. Netw .}, vol.
  2019, no.~1, pp. 1--20, 2019.

\bibitem{Basar_IA_2019}
E.~Basar, M.~Di~Renzo, J.~De~Rosny, M.~Debbah, M.-S. Alouini, and R.~Zhang,
  ``Wireless communications through reconfigurable intelligent surfaces,''
  \emph{IEEE Access}, vol.~7, pp. 116\,753--116\,773, 2019.

\bibitem{Wu_ICOMM_2020}
Q.~Wu and R.~Zhang, ``Towards smart and reconfigurable environment: Intelligent
  reflecting surface aided wireless network,'' \emph{{IEEE} Commun. Mag.},
  vol.~58, no.~1, pp. 106--112, 2020.

\bibitem{Emil_arxiv_2021}
E.~Bj{\"o}rnson, H.~Wymeersch, B.~Matthiesen, P.~Popovski, L.~Sanguinetti, and
  E.~de~Carvalho, ``Reconfigurable intelligent surfaces: A signal processing
  perspective with wireless applications,'' \emph{arXiv preprint
  arXiv:2102.00742}, 2021.

\bibitem{Viswanath_TIT_2002}
P.~Viswanath, D.~Tse, and R.~Laroia, ``Opportunistic beamforming using dumb
  antennas,'' \emph{{IEEE} Trans. Inf. Theory}, vol.~48, no.~6, pp. 1277--1294,
  2002.

\bibitem{Wang_IVT_2020}
P.~Wang, J.~Fang, X.~Yuan, Z.~Chen, and H.~Li, ``Intelligent reflecting
  surface-assisted millimeter wave communications: Joint active and passive
  precoding design,'' \emph{{IEEE} Trans. Veh. Technol.}, vol.~69, no.~12, pp.
  14\,960--14\,973, 2020.

\bibitem{Wu_GLOBECOM_2018}
Q.~Wu and R.~Zhang, ``Intelligent reflecting surface enhanced wireless network:
  Joint active and passive beamforming design,'' in \emph{Proc. IEEE Global
  Commun. Conf.}, 2018, pp. 1--6.

\bibitem{Huang_TWC_2019}
C.~Huang, A.~Zappone, G.~C. Alexandropoulos, M.~Debbah, and C.~Yuen,
  ``Reconfigurable intelligent surfaces for energy efficiency in wireless
  communication,'' \emph{{IEEE} Trans. Wireless Commun.}, vol.~18, no.~8, pp.
  4157--4170, 2019.

\bibitem{Guo_GLOBECOM_2019}
H.~Guo, Y.-C. Liang, J.~Chen, and E.~G. Larsson, ``Weighted sum-rate
  maximization for intelligent reflecting surface enhanced wireless networks,''
  in \emph{Proc. IEEE Global Commun. Conf.}, 2019, pp. 1--6.

\bibitem{Yang_TCOM_2020}
Y.~Yang, B.~Zheng, S.~Zhang, and R.~Zhang, ``Intelligent reflecting surface
  meets {OFDM}: Protocol design and rate maximization,'' \emph{{IEEE} Trans.
  Commun.}, vol.~68, no.~7, pp. 4522--4535, 2020.

\bibitem{Rui_TWC_2020}
B.~Zheng, C.~You, and R.~Zhang, ``Intelligent reflecting surface assisted
  multi-user {OFDMA}: Channel estimation and training design,'' \emph{{IEEE}
  Trans. Wireless Commun.}, vol.~19, no.~12, pp. 8315--8329, 2020.

\bibitem{Lin_TWC_early}
Y.~Lin, S.~Jin, M.~Matthaiou, and X.~You, ``Channel estimation and user
  localization for {IRS}-assisted {MIMO-OFDM} systems,'' \emph{{IEEE} Trans.
  Wireless Commun.}, pp. 1--1, 2021.

\bibitem{Li_TCOM_2021}
H.~Li, W.~Cai, Y.~Liu, M.~Li, Q.~Liu, and Q.~Wu, ``Intelligent reflecting
  surface enhanced wideband {MIMO-OFDM} communications: From practical model to
  reflection optimization,'' \emph{{IEEE} Trans. Commun.}, vol.~69, no.~7, pp.
  4807--4820, 2021.

\bibitem{Li_WCNC_2020}
H.~Li, R.~Liu, M.~Liy, Q.~Liu, and X.~Li, ``{IRS}-enhanced wideband
  {MU-MISO-OFDM} communication systems,'' in \emph{Proc. IEEE Wireless Commun.
  Netw. Conf.}, 2020, pp. 1--6.

\bibitem{Wu_TCOM_2021}
Q.~Wu, S.~Zhang, B.~Zheng, C.~You, and R.~Zhang, ``Intelligent reflecting
  surface-aided wireless communications: A tutorial,'' \emph{{IEEE} Trans.
  Commun.}, vol.~69, no.~5, pp. 3313--3351, 2021.

\bibitem{Zhao_arxiv}
J.~Zhao, ``A survey of intelligent reflecting surfaces ({IRSs}): Towards {6G}
  wireless communication networks,'' \emph{arXiv preprint arXiv:1907.04789},
  2019.

\bibitem{Chen_CC_2021}
Z.~Chen, X.~Ma, C.~Han, and Q.~Wen, ``Towards intelligent reflecting surface
  empowered {6G} terahertz communications: A survey,'' \emph{China Commun.},
  vol.~18, no.~5, pp. 93--119, 2021.

\bibitem{Mishra_ICASSP_2019}
D.~Mishra and H.~Johansson, ``Channel estimation and low-complexity beamforming
  design for passive intelligent surface assisted {MISO} wireless energy
  transfer,'' in \emph{Proc. IEEE Int. Conf. Acoust. Speech Signal Process.},
  2019, pp. 4659--4663.

\bibitem{Nadeem_JCS_2020}
Q.-U.-A. Nadeem, H.~Alwazani, A.~Kammoun, A.~Chaaban, M.~Debbah, and M.-S.
  Alouini, ``Intelligent reflecting surface-assisted multi-user {MISO}
  communication: Channel estimation and beamforming design,'' \emph{IEEE Open
  J. Commun. Soc.}, vol.~1, pp. 661--680, 2020.

\bibitem{Wei_1_ICL_2021}
X.~Wei, D.~Shen, and L.~Dai, ``Channel estimation for {RIS} assisted wireless
  communications--part {I}: Fundamentals, solutions, and future
  opportunities,'' \emph{{IEEE} Commun. Lett.}, vol.~25, no.~5, pp. 1398--1402,
  2021.

\bibitem{Wei_2_ICL_2021}
------, ``Channel estimation for {RIS} assisted wireless communications--part
  {II}: An improved solution based on double-structured sparsity,''
  \emph{{IEEE} Commun. Lett.}, vol.~25, no.~5, pp. 1403--1407, 2021.

\bibitem{Lin_arxiv}
T.~Lin, X.~Yu, Y.~Zhu, and R.~Schober, ``Channel estimation for {IRS}-assisted
  millimeter-wave {MIMO} systems: Sparsity-inspired approaches,'' \emph{arXiv
  preprint arXiv:2107.11605}, 2021.

\bibitem{Wu_TCOM_2020}
Q.~Wu and R.~Zhang, ``Beamforming optimization for wireless network aided by
  intelligent reflecting surface with discrete phase shifts,'' \emph{{IEEE}
  Trans. Commun.}, vol.~68, no.~3, pp. 1838--1851, 2020.

\bibitem{Hassibi_TIT_2005}
M.~Sharif and B.~Hassibi, ``On the capacity of {MIMO} broadcast channels with
  partial side information,'' \emph{{IEEE} Trans. Inf. Theory}, vol.~51, no.~2,
  pp. 506--522, 2005.

\bibitem{Asadi_CST_2013}
A.~Asadi and V.~Mancuso, ``A survey on opportunistic scheduling in wireless
  communications,'' \emph{{IEEE} Commun. Surveys Tuts.}, vol.~15, no.~4, pp.
  1671--1688, 2013.

\bibitem{Yang_SMJ_2022}
Z.~Yang, P.~Xu, G.~Chen, Y.~Wu, and Z.~Ding, ``Performance analysis of
  {IRS}-assisted {NOMA} networks with randomly deployed users,'' \emph{{IEEE}
  Syst. J.}, pp. 1--12, 2022.

\bibitem{Nadeem_TWC_2021}
Q.-U.-A. Nadeem, A.~Zappone, and A.~Chaaban, ``Intelligent reflecting surface
  enabled random rotations scheme for the {MISO} broadcast channel,''
  \emph{{IEEE} Trans. Wireless Commun.}, vol.~20, no.~8, pp. 5226--5242, 2021.

\bibitem{Nadeem_WCL_2021}
Q.-U.-A. Nadeem, A.~Chaaban, and M.~Debbah, ``Opportunistic beamforming using
  an intelligent reflecting surface without instantaneous {CSI},'' \emph{{IEEE}
  Wireless Commun. Lett.}, vol.~10, no.~1, pp. 146--150, 2021.

\bibitem{Mahmood_EURASIP_2021}
N.~H. Mahmood \emph{et~al.}, ``Machine type communications: key drivers and
  enablers towards the 6{G} era,'' \emph{EURASIP J. Wirel. Commun. Netw.}, vol.
  2021, no.~1, pp. 1--25, 2021.

\bibitem{Pai_5G_spectrum}
D.~Pai, ``5{G} spectrum,'' \emph{Devopedia}.

\bibitem{suresh_NCC_2010}
A.~T. Suresh, N.~B. Mehta, and V.~Shah, ``On optimal timer-based distributed
  selection for rate-adaptive multi-user diversity systems,'' in \emph{Natl.
  Conf. Commun. (NCC)}.\hskip 1em plus 0.5em minus 0.4em\relax IEEE, 2010, pp.
  1--5.

\bibitem{Shah_TCOM_2010}
V.~Shah, N.~B. Mehta, and R.~Yim, ``Optimal timer based selection schemes,''
  \emph{{IEEE} Trans. Commun.}, vol.~58, no.~6, pp. 1814--1823, 2010.

\bibitem{Yashvanth_Arxiv_2022}
L.~Yashvanth and C.~R. Murthy, ``Performance analysis of intelligent reflecting
  surface assisted opportunistic communications,'' \emph{arXiv preprint
  arXiv:2203.06313}, 2022.

\bibitem{meta_material_switching_1}
M.~Kafesaki, N.~Shen, S.~Tzortzakis, and C.~Soukoulis, ``Optically switchable
  and tunable terahertz metamaterials through photoconductivity,''
  \emph{Journal of Optics}, vol.~14, no.~11, p. 114008, 2012.

\bibitem{meta_material_switching_2}
P.~Jung, S.~Butz, M.~Marthaler, M.~Fistul, J.~Lepp{\"a}kangas, V.~Koshelets,
  and A.~Ustinov, ``Multistability and switching in a superconducting
  metamaterial,'' \emph{Nature Communications}, vol.~5, no.~1, pp. 1--6, 2014.

\bibitem{Dahlman_5g_book_2018}
E.~Dahlman, S.~Parkvall, and J.~Skold, \emph{{5G NR}: The next generation
  wireless access technology}.\hskip 1em plus 0.5em minus 0.4em\relax Academic
  Press, 2020.

\bibitem{Tian_EURASIP_2021}
G.~Tian and R.~Song, ``Cooperative beamforming for a double-{IRS}-assisted
  wireless communication system,'' \emph{Eurasip J. Adv. Signal Process.}, vol.
  2021, no.~1, pp. 1--10, 2021.

\bibitem{Tse_book_2005}
D.~Tse and P.~Viswanath, \emph{Fundamentals of wireless communication}.\hskip
  1em plus 0.5em minus 0.4em\relax Cambridge university press, 2005.

\bibitem{Raghavan_JSTSP_2016}
V.~Raghavan, J.~Cezanne, S.~Subramanian, A.~Sampath, and O.~Koymen,
  ``Beamforming tradeoffs for initial {UE} discovery in millimeter-wave {MIMO}
  systems,'' \emph{{IEEE} J. Sel. Topics Signal Process.}, vol.~10, no.~3, pp.
  543--559, 2016.

\bibitem{Zheng_WCL_2020}
B.~Zheng and R.~Zhang, ``Intelligent reflecting surface-enhanced {OFDM}:
  Channel estimation and reflection optimization,'' \emph{{IEEE} Wireless
  Commun. Lett.}, vol.~9, no.~4, pp. 518--522, 2020.

\bibitem{Katsanos_Arxiv_2022}
K.~D. Katsanos, N.~Shlezinger, M.~F. Imani, and G.~C. Alexandropoulos,
  ``Wideband multi-user {MIMO} communications with frequency selective {RISs}:
  Element response modeling and sum-rate maximization,'' \emph{arXiv preprint
  arXiv:2202.02169}, 2022.

\bibitem{Sheetal_TCOM_2012}
S.~Kalyani and R.~M. Karthik, ``The asymptotic distribution of maxima of
  independent and identically distributed sums of correlated or non-identical
  gamma random variables and its applications,'' \emph{{IEEE} Trans. Commun.},
  vol.~60, no.~9, pp. 2747--2758, 2012.

\bibitem{Exton_book_1976}
H.~Exton, \emph{Multiple hypergeometric functions and applications}.\hskip 1em
  plus 0.5em minus 0.4em\relax Ellis Horwood, 1976.

\bibitem{Eidous_MS_2016}
O.~Eidous and S.~Al-Salman, ``One-term approximation for normal distribution
  function,'' \emph{Mathematics and Statistics}, vol.~4, no.~1, pp. 15--18,
  2016.

\bibitem{Lozano_EWC_2010}
A.~Lozano and N.~Jindal, ``Optimum pilot overhead in wireless communication: A
  unified treatment of continuous and block-fading channels,'' in \emph{2010
  European Wireless Conference (EW)}.\hskip 1em plus 0.5em minus 0.4em\relax
  IEEE, 2010, pp. 725--732.

\bibitem{Bjornson_spcup_2021}
E.~Bj{\"o}rnson, ``Optimizing a binary intelligent reflecting surface for
  {OFDM} communications under mutual coupling,'' \emph{arXiv preprint
  arXiv:2106.04280}, 2021.

\bibitem{Najim_book_2004}
K.~Najim, E.~Ikonen, and A.-K. Daoud, \emph{Stochastic processes: estimation,
  optimisation and analysis}.\hskip 1em plus 0.5em minus 0.4em\relax Elsevier,
  2004.

\bibitem{Billingsley_book_1995}
P.~Billingsley, \emph{Probability and Measure. 3rd Wiley}, 1995.

\bibitem{Arnold_book_2007}
B.~C. Arnold, N.~Balakrishnan, and H.~N. Nagaraja, \emph{A first course in
  order statistics}.\hskip 1em plus 0.5em minus 0.4em\relax SIAM, 2008.

\end{thebibliography}

\includepdf[pages=-]{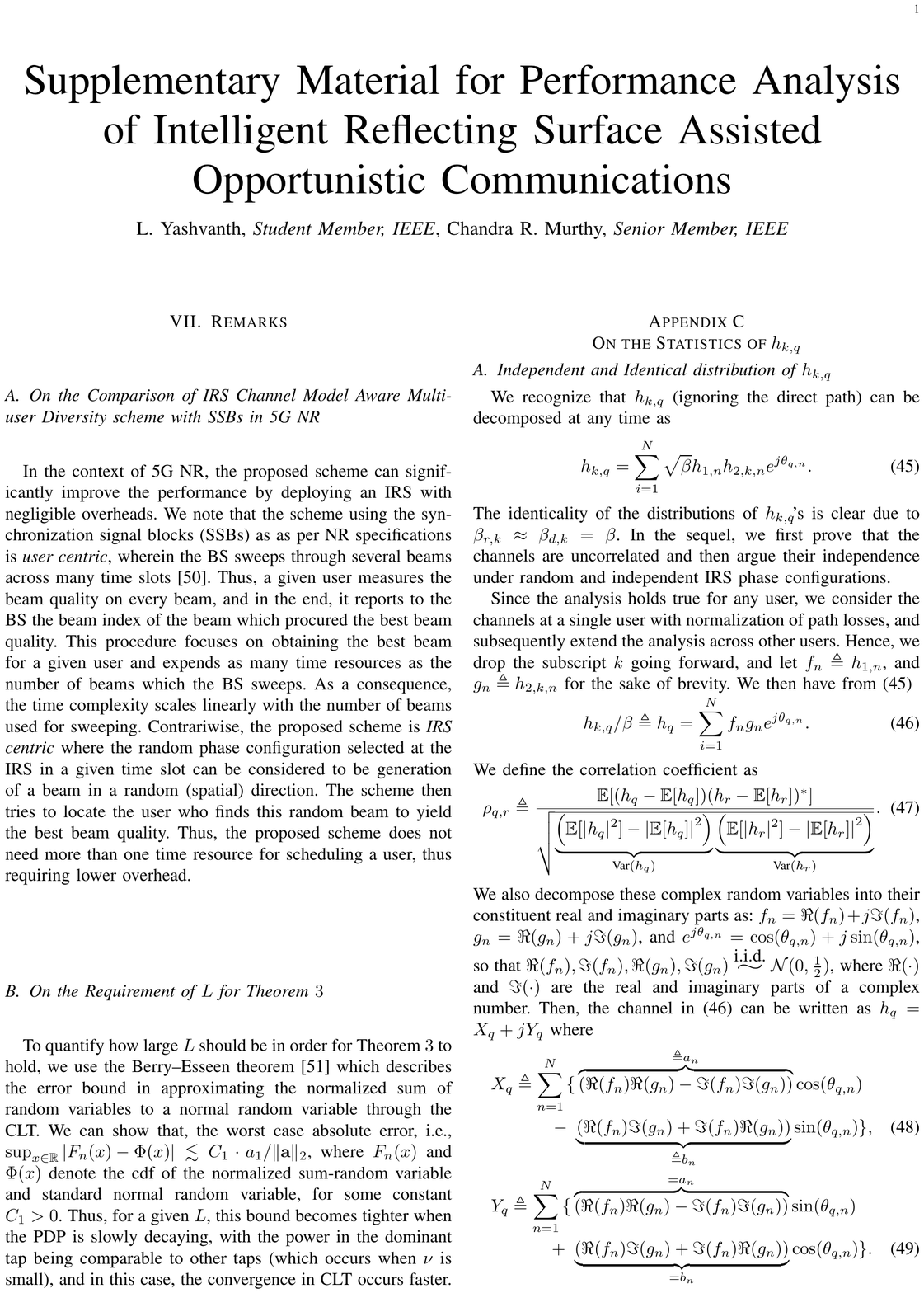}

\end{document}